\tikzstyle{none}=[inner sep=0pt]
\tikzstyle{rn}=[circle,fill=Red,draw=Black,line width=0.8 pt]
\tikzstyle{gn}=[circle,fill=White,draw=Black,line width=0.8 pt]
\tikzstyle{yn}=[circle,fill=Yellow,draw=Black,line width=0.8 pt]
\tikzstyle{simple}=[circle,fill=White,draw=Black]
\tikzstyle{newstyle1}=[circle,fill=Black,draw=Black,line width=0.3 pt,inner sep=0pt]
\tikzstyle{simple2}=[-,dashed,draw=Black]
\tikzstyle{simpledotted}=[-,dotted,draw=Black]
\tikzstyle{simple}=[-,draw=Black,line width=2.000]
\tikzstyle{arrow}=[-,draw=Black,postaction={decorate},decoration={markings,mark=at position .5 with {\arrow{>}}},line width=2.000]
\tikzstyle{tick}=[-,draw=Black,postaction={decorate},decoration={markings,mark=at position .5 with {\draw (0,-0.1) -- (0,0.1);}},line width=2.000]
\tikzstyle{newstyle2}=[-latex,draw=Black]
\tikzstyle{newstyle3}=[->,dotted,draw=Black]
\tikzstyle{newstyle6}=[->,dotted,draw=Black]
\newcommand{\mypar}[1]{\medskip\noindent\textbf{#1.}}
\def\rConstrHelper(#1,#2,#3,#4){{#1}\ \substack{#2 \\ #3}\ {#4}}
\newcommand{\sss}{{\mathrel{\kern.25em{\sqsubseteq}\kern-.5em \mbox{{\scriptsize *}}\kern.25em}}}
\newcommand{\smallsss}{{\mathrel{\kern.25em{\sqsubseteq}\kern-.4em \mbox{{\scriptsize *}}\kern.25em}}}
\newcommand{\I}{\ensuremath{\mathcal{I}}\xspace}
\newcommand{\node}{\ensuremath{\mathop{\mathsf{node}}}}
\newcommand{\EL}{\ensuremath{{\cal E\!L}}\xspace}
\newlength{\indxlength}
\newcommand{\PTime}{\textsc{PTime}\xspace}
\newcommand{\PTimeL}{\textsc{PTimeL}\xspace}
\newcommand{\PTimePL}{\textsc{PTimePL}\xspace}
\newcommand{\Emc}{\ensuremath{\mathcal{E}}\xspace}
\newcommand{\Imc}{\ensuremath{\mathcal{I}}\xspace}
\newcommand{\Kmc}{\ensuremath{\mathcal{K}}\xspace}
\newcommand{\Lmc}{\ensuremath{\mathcal{L}}\xspace}
\newcommand{\Fmf}{\ensuremath{\mathfrak{F}}\xspace}
\newcommand{\Rsf}{\ensuremath{\mathsf{R}}\xspace}
 \newcommand{\examples}{\ensuremath{\Emc}\xspace}
\newcommand{\target}{\ensuremath{t}\xspace} 
\newcommand{\hypothesisSpace}{\ensuremath{\Lmc}\xspace}
\newcommand{\e}{\ensuremath{e}\xspace}
\newtheorem{corollary}{Corollary}
\newtheorem{theorem}{Theorem}
\newtheorem{example}[theorem]{Example}
\newtheorem{remark}{Remark}
\newtheorem{lemma}[theorem]{Lemma}
\newtheorem{claim}{Claim}[theorem]
\newcommand{\MQc}[2]{\ensuremath{{\sf MQ}_{#1,#2}\xspace}}
\newcommand{\EQc}[2]{\ensuremath{{\sf EQ}_{#1,#2}\xspace}}
\newcommand{\val}{\ensuremath{{\sf val}}\xspace}
\newcommand{\inc}{\ensuremath{{\sf inc}}\xspace}
\newcommand{\p}{\ensuremath{{\sf prec}}\xspace}
\newcommand{\safe}{safe\xspace}
\newsavebox{\spacebox}
  \newcommand{\N}{\ensuremath{N}\xspace}
 \newcommand{\formula}{\ensuremath{\varphi}\xspace}
 \newcommand{\pos}{\ensuremath{\alpha}\xspace}
 \newcommand{\kb}{\emph{\ensuremath{\Kmc}}\xspace}
 \newcommand{\alphabar}{\ensuremath{\overline{\alpha}}\xspace}
 \newcommand{\algfindval}{\ensuremath{\mathsf{FindValuation}_t}\xspace}
\title{On the Learnability of  Possibilistic Theories}
\author{Cosimo Persia \And Ana Ozaki\footnote{Contact Author: Ana.Ozaki@uib.no
%
\newline
\copyright The sole copyright holder is IJCAI (\href{www.ijcai.org}{International Joint Conferences on Artificial Intelligence}), all rights reserved.
}
\affiliations
University of Bergen  
}
\begin{document}


\maketitle

\begin{abstract}
  We investigate learnability 
  of possibilistic  theories 
   from entailments in light of Angluin's exact learning model.
          We consider cases in which only membership,
  only equivalence, and both kinds of  queries can be posed by the learner.
  We then show that, for a large class of problems,
    polynomial time learnability results for classical logic 
  can be transferred
      to the respective possibilistic extension. 
    In particular, it follows from our results that the possibilistic 
  extension of propositional Horn theories is 
  exactly learnable in polynomial time.
  As polynomial time learnability   in the exact model 
  is transferable to the   classical  probably 
  approximately correct   model extended with membership queries, 
  our work also establishes such results
  in this model.                   \end{abstract}

\section{Introduction}
 
 Uncertainty is found in 
many phases of learning,
such as model selection and
processing noisy, imperfect, incomplete or limited data.
In most cases,
knowledge-based
systems are constrained to live under conditions of ignorance.
There are different approaches to deal with uncertainty \cite{Parsons1998}.
A well-studied formalism for dealing 
with it is 
\emph{possibilistic logic} \cite{duboisPosLog,Lang2000}.
It admits a graded notion of possibility
and
 makes a clear distinction between 
the concepts of truth and belief \cite{possvsmultivalued}.
Uncertainty of formulas in possibilistic logic is not subject to the 
complement rule as in probability theory
\cite{Parul:2015:2248-9622:37,327367}.
Indeed,
complementary  formulas may be considered fully possible,  
meaning complete ignorance
about
their truth value.

\begin{example}\label{ex:intro}\upshape
		Consider a doctor who  
	has to  diagnose
	 a patient that suffers from extreme fatigue.
	A doctor can consider 
	blood-related conditions: iron deficiency, iron overload, and vitamin B12 deficiency. 	Within possibility theory, one can model cases of complete uncertainty.
			Both 
	iron deficiency and iron overload, which are two mutually exclusive conditions,
	can be considered  
	fully possible.
	Consider that vitamin B12 deficiency is considered to be less possible, e.g. associated with the value $1/3$,
	based on some information provided by the patient.
	In probability theory, complete ignorance of the first two conditions
	  would make us assign probability $1/3$ to every condition (Laplace criterion).
	  Thus, it would not model the
	  knowledge about vitamin B12 deficiency and the 	  ignorance about
	  iron deficiency and iron overload. \hfill {\mbox{$\triangleleft$}}
	  
									\end{example}

Although  possibilistic logic  
has been extensively studied \cite{DBLP:reference/sp/DuboisP15}, 
there are not many works that investigate learnability of 
possibilistic theories. 
In this work, we partially cover this gap by
studying whether 
possibilistic theories are learnable in   Angluin's 
exact learning model~\cite{angluinqueries}.
In this model, a learner 
interacts with a teacher to exactly identify an abstract target concept. One can see the doctor, in Example~\ref{ex:intro}, as a learner who inquires 
the patient (playing the role of a teacher) in order to  identify a disease. 

The most studied communication protocol in this model 
contains  queries of two kinds, called \emph{membership} and \emph{equivalence} queries. 
Membership queries allow the learner to know whether
a certain statement holds. 
Equivalence queries allow the learner to check whether a hypothesis (e.g. a  diagnose)
is correct and, if not, to fix it using a counterexample.
In our toy scenario, the patient may not be able to provide a counterexample
but new symptoms or reactions can reveal that the hypothesis is not correct.  
To the best of our knowledge, this is the first work where 
learnability of possibilistic theories is investigated in Angluin's model. 
We consider cases in which only membership,
only equivalence, and both kinds of  queries can be posed by the learner.
We also study whether known   polynomial time exact learning results 
for classical logic can be transferred to possibilistic settings.

Our main result is that, for a large class of problems, 
polynomial time learnability  (with both types of queries)
 can be transferred
from classical logic 
      to the respective possibilistic extension (Theorem~\ref{thm:mqeq}). 
If only membership queries are allowed 
(and the maximal precision 
of valuations in the target is  fixed) then 
polynomial time learnability of a classical logic can also be transferred to the
possibilistic extension.
We leave  open the case in which only equivalence queries can be asked.
With our main result, 
we establish, e.g., that the possibilistic extension 
of  propositional Horn~\cite{DBLP:journals/ml/AngluinFP92,DBLP:conf/icml/FrazierP93,DBLP:journals/toct/HermoO20} and fragments of first-order Horn~\cite{arimura1997learning,reddy1998learning,KLOW18} are exactly
learnable in polynomial time. 
As polynomial time learnability   in the exact model 
  is transferable to the       probably 
  approximately correct (PAC)~\cite{Valiant}    model extended with membership queries, 
  our work also establishes such results
  in this model.

\mypar{Related Work}
Among the works that
combine learning and possibilistic logic,
we can find results 
on 
learning 
possibilistic logic theories from default rules
within the PAC learning model~\cite{DBLP:conf/ijcai/KuzelkaDS16}.
Possibilistic logic has been used 
to reason with default rules~\cite{DBLP:conf/kr/BenferhatDP92}
to
select the most plausible rule 
and
in inductive logic programming 
to handle exceptions~\cite{DBLP:journals/ai/SerrurierP07}.
In statistical relational learning,
possibilistic logic has been used as
a formal encoding of statistical regularities
found in relational data~\cite{DBLP:conf/ijcai/KuzelkaDS17}.
Possibilistic formulas can encode 
Markov logic networks~\cite{DBLP:conf/uai/KuzelkaDS15}.
Formal concept analysis 
has been applied to generate 
attribute implications with a degree of certainty~\cite{Djouadi10}.
We also point out an extension of version space learning
that deals with examples associated with possibility degrees~\cite{doi:10.1002/int.20310}.

\smallskip

In Section~\ref{sec:basic}, we present basic definitions.
In Section~\ref{sec:learnability}, we investigate
whether possibilistic logic theories can be learned and,
in Section~\ref{sec:reduction}, we show transferability of polynomial time learnability
results.

\section{Basics}
\label{sec:basic}
In the following, we provide relevant notions of possibilistic logic and learning theory 
used in  the paper. 
\subsection{Possibilistic Theories}

Let $L$ be a propositional or a first-order  (FO) language (restricted to 
well-formed formulas without free variables) with the semantics of classical FO logic.
We say that $\formula\in L$ is \emph{satisfiable} if there is 
an interpretation $\Imc$ such that $\formula$ is satisfied
in $\Imc$. Moreover, $\formula$ is \emph{falsifiable} if its negation $\lnot \formula$ is satisfiable.
An \emph{FO knowledge base} (FO KB) is a finite set of FO formulas. 
An FO KB is \emph{non-trivial} 
if it is satisfiable and falsifiable.
The \emph{possibilistic extension} of an FO language $L$ is defined as follows. A \emph{possibilistic formula} is a pair $(\formula,\pos)$,
where
$\formula\in L$
and
$\pos$ is a real  number (with finite precision) in the interval $(0,1]$, called the \emph{valuation} of $\formula$.
A \emph{possibilistic KB} (or a possibilistic theory) is a 
finite 
set
$\kb$
of possibilistic formulas.   
Given a set $\Omega$ of
interpretations for $L$,
a \emph{possibility distribution} $\pi$ is a function from $\Omega$ to the interval $[0,1]$. The \emph{possibility}  
and \emph{necessity measures}, $\Pi$ and $\N$, are functions (induced by $\pi$) from $L$ to $[0, 1]$, defined respectively as
\[\Pi(\varphi)=\sup\{\pi(\Imc)\mid \Imc\in\Omega, \ \Imc\models\varphi\}\] \[  N(\varphi) = 1-\Pi(\neg\varphi)
 = \inf\{ 1- \pi(\I) \mid \Imc\in\Omega, \ \I \models \neg \varphi \}.\] 
   A possibility distribution $\pi$ \emph{satisfies}  a 
 possibilistic formula $(\formula,\pos)$,
 written $\pi \models (\varphi, \pos)$,
 if $\N(\varphi)\geq \pos$,
 and it satisfies a possibilistic KB
 $\kb = \{ (\formula_{i},\alpha_{i}) \mid 0 \leq i < n \}$
 if it satisfies each
$(\formula_{i},\alpha_{i}) \in \kb$.
We have that  $(\formula,\pos)$ is \emph{entailed} by $\kb$, 
written $\kb\models (\formula,\pos)$, if all possibility distributions 
that satisfy \kb also satisfy $(\formula,\pos)$.
Given $\kb$ as above and $\Imc \in \Omega$, we define the possibility distribution $\pi_\kb$ as follows:
$\pi_\kb(\Imc) = 1$, if $\Imc \models \formula_{i}$, for every $ (\formula_{i},\alpha_{i}) \in \kb$; otherwise,
$\pi_\kb(\Imc) = \min \{ 1 - \alpha_{i} \mid \Imc \models \lnot \formula_{i}, 0 \leq i < n \}$.

The \emph{FO projection} of $\kb$ is the set
$\kb^{\ast} = \{ \formula_{i} \mid (\formula_{i}, \alpha_{i}) \in \kb \}.$
The \emph{$\alpha$-cut} and the \emph{$\alphabar$-cut} of $\kb$,
with $\alpha \in (0,1]$,
are defined respectively as
$\kb_\alpha =  \{ (\formula,\beta) \in \kb \mid \beta \geq \alpha \}$
and
$\kb_{\alphabar} = \{ (\formula, \beta) \in \kb \mid \beta > \alpha \}$.
The set of all valuations occurring in $\kb$ is 
$\kb^{v} = \{ \alpha \mid (\formula, \alpha) \in \kb \}$.
Moreover, 
$\val(\formula, \kb)=\sup\{ \alpha \mid \kb \models(\formula,\alpha) \}$ is the least upper bound of the valuations of formulas entailed by $\kb$.
Finally, the \emph{inconsistency degree} of $\kb$ is defined as
$\inc(\kb) = \sup\{ \alpha \mid \kb \models (\bot, \alpha) \}$.

\begin{lemma}\cite{duboisPosLog} 		\label{p:propositions}
	Let \Kmc be a possibilistic KB. For every possibilistic formula 
	$(\phi,\alpha)$,  	\begin{enumerate}
	\item $\Kmc \models (\phi,\alpha)$ 
	iff $\Kmc_\alpha^\ast \models \phi $; 		\item $\Kmc \models (\phi,\alpha)$ iff 
	 $\alpha\leq \val(\phi,\Kmc)$; and 
	 \item  $ \Kmc \models (\phi,\alpha) $ implies $ \val(\phi,\Kmc)\in \Kmc^v \cup\{1\}$.
	  	\end{enumerate}
\end{lemma}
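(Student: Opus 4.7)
The plan is to prove the three parts in the order (1), (3), (2), with the canonical possibility distribution $\pi_\Kmc$ from the preamble as the central tool; write $\Kmc = \{(\varphi_i, \alpha_i) \mid 0 \leq i < n\}$. For part (1), I would first establish that $\pi_\Kmc$ is the pointwise largest possibility distribution satisfying $\Kmc$. The direction $\pi_\Kmc \models \Kmc$ is immediate from the definition: if $\Imc \models \neg \varphi_j$ then $\pi_\Kmc(\Imc) \leq 1 - \alpha_j$, which is exactly the condition forcing $N_{\pi_\Kmc}(\varphi_j) \geq \alpha_j$. Conversely, for any $\pi \models \Kmc$, the constraint $\pi \models (\varphi_j, \alpha_j)$ forces $\pi(\Imc) \leq 1 - \alpha_j$ whenever $\Imc \models \neg \varphi_j$; taking the minimum over all such $j$ then gives $\pi(\Imc) \leq \pi_\Kmc(\Imc)$. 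Since $N_\pi(\phi)$ is antitone in $\pi$, the minimum of $N_\pi(\phi)$ over $\pi \models \Kmc$ is attained at $\pi_\Kmc$, so $\Kmc \models (\phi, \alpha)$ iff $N_{\pi_\Kmc}(\phi) \geq \alpha$. Unpacking this via the definitions of $N$ and of $\pi_\Kmc$ yields: for every $\Imc \models \neg \phi$ there is some $\varphi_i$ with $\alpha_i \geq \alpha$ and $\Imc \models \neg \varphi_i$, i.e., some formula of $\Kmc_\alpha^\ast$ is falsified in $\Imc$. This is precisely the condition $\Kmc_\alpha^\ast \models \phi$.

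For part (3), by (1) the set $S = \{\beta \in (0, 1] \mid \Kmc \models (\phi, \beta)\}$ equals $\{\beta \in (0, 1] \mid \Kmc_\beta^\ast \models \phi\}$. Since $\beta \leq \beta'$ implies $\Kmc_{\beta'}^\ast \subseteq \Kmc_\beta^\ast$, the set $S$ is downward closed in $(0, 1]$; and since $\Kmc^v$ is finite, the map $\beta \mapsto \Kmc_\beta^\ast$ is a step function whose only transition points lie in $\Kmc^v$. For $\beta$ strictly above $\max \Kmc^v$, $\Kmc_\beta^\ast = \emptyset$ and $\Kmc_\beta^\ast \models \phi$ reduces to $\phi$ being a tautology; in that degenerate case $S = (0, 1]$ and $\sup S = 1$. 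Otherwise the sup is attained at the largest $v \in \Kmc^v$ with $\Kmc_v^\ast \models \phi$, giving $\val(\phi, \Kmc) = v \in \Kmc^v$. Either way, $\val(\phi, \Kmc) \in \Kmc^v \cup \{1\}$.

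Part (2) then follows quickly: if $\Kmc \models (\phi, \alpha)$, then $\alpha \in S$, so $\alpha \leq \sup S = \val(\phi, \Kmc)$; conversely, assuming $\alpha \leq \val(\phi, \Kmc)$, the analysis in (3) shows that $\sup S$ is attained, hence $\val(\phi, \Kmc) \in S$, and downward closure of $S$ yields $\alpha \in S$, i.e., $\Kmc \models (\phi, \alpha)$. The main obstacle, I expect, is the largest-distribution claim for $\pi_\Kmc$ that underpins (1); once that is in hand, the remainder is careful unpacking of the necessity measure together with the finiteness of $\Kmc^v$.
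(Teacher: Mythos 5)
Your proof is correct, but it is a genuinely different (and far more self-contained) route than the paper's. The paper proves almost nothing directly: Points 1 and 2 are dispatched by citing Propositions 3.5.2, 3.5.5, 3.5.6 and Property 1 of Dubois--Prade, and only Point 3 gets an argument, namely that $\pi_\Kmc$ takes values only in $\{1\}\cup\{1-\beta\mid\beta\in\Kmc^v\}$, hence $N_\Kmc(\phi)=\inf\{1-\pi_\Kmc(\Imc)\mid\Imc\models\neg\phi\}\in\Kmc^v\cup\{0,1\}$, after which the cited identity $N_\Kmc(\phi)=\val(\phi,\Kmc)$ and Point 2 discard the value $0$. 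You instead supply the content the paper outsources: the least-specificity argument that $\pi_\Kmc$ is the pointwise greatest distribution satisfying $\Kmc$, combined with antitonicity of $N_\pi$ in $\pi$, which reduces entailment to a single check on $\pi_\Kmc$ and yields Point 1 by unpacking; and you then obtain Point 3 not from the range of $\pi_\Kmc$ but from the cut characterization, observing that $\beta\mapsto\Kmc^\ast_\beta$ is a step function whose transitions lie in the finite set $\Kmc^v$, so the downward-closed set $S=\{\beta\mid\Kmc^\ast_\beta\models\phi\}$ attains its supremum in $\Kmc^v\cup\{1\}$. The two Point-3 arguments are equivalent in substance (both rest on finiteness of $\Kmc^v$), but yours avoids the external identity $N_\Kmc(\phi)=\val(\phi,\Kmc)$, and the attainment of the supremum you establish is precisely what makes your backward direction of Point 2 work, whereas the paper takes Point 2 on citation. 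The only caveat worth flagging is cosmetic: valuations are required to have finite precision, so $S$ is a dense subset of an interval rather than an interval, but this changes neither the supremum nor the downward-closure argument.
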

\begin{proof}
Point~1 is a consequence of 
Propositions 3.5.2, 3.5.5, and 3.5.6,  and Point~2 is Property~1 at page 453 in \cite{duboisPosLog}.
We argue about Point~3.
By definition of $\pi_\Kmc$, for all $\Imc\in\Omega$,
		 $\pi_\Kmc(\I) $ is either $ 1 $ or $ 1-\beta $ for some $ \beta\in\Kmc^v $.
		 Let $N_\Kmc$ be the necessity measure induced by $\pi_\Kmc$.
		 		By definition of $N_\Kmc$, $ N_\Kmc(\phi)  = \inf\{ 1- \pi_\Kmc(\I) \mid \Imc\in\Omega, \ \I \models \neg \varphi \}$. 
Then, $ N_\Kmc(\phi) \in  \Kmc^v \cup\{0,1\}$ (recall that $\inf\{\}$ is $1$, which is the case for tautologies). 
		By the semantics of possibilistic logic, $ N_\Kmc(\phi)=\val(\phi,\Kmc)$~\cite[Corollary~3.2.3]{duboisPosLog}. As $(\phi,\alpha)$ is a possibilistic formula, $\alpha>0$. So, by Point~2,  $ N_\Kmc(\phi)=\val(\phi,\Kmc)\in \Kmc^v \cup\{1\}$. 
\end{proof}

We denote by $ =_p $ the operator that checks if two numbers are equal up to precision $ p $.  
For example $ 0.124 =_2 0.12345 $ but $ 0.124 \neq_3 0.12345 $.
Assume $ \alpha\in (0,1] $  has finite precision.
We write $\p(\alpha)$ for the precision of $ \alpha $ and
$\p(t)$ for $sup\{\p(\alpha) \mid (\phi,\alpha)\in t\}$.
Given an interval $I$, we write $ I_p $ for the set containing all $ \alpha \in I$
with $ \p(\alpha) = p$.

\begin{example}\label{ex:iron}\upshape
One can express (1) mutual exclusion of iron deficiency and iron overload
	and (2)
	lower necessity of iron overload to be the cause of fatigue
	than iron deficiency with 
	the possibilistic KB $ \{(\forall x({\sf IronDef}(x) \rightarrow \neg {\sf IronOver}(x)),1),
	(\forall x({\sf IronDef}(x) \rightarrow {\sf Fatigue}(x)),0.9),
	(\forall x({\sf IronOver}(x) \rightarrow {\sf Fatigue}(x),0.8)\} $. \hfill {\mbox{$\triangleleft$}}
	\end{example}	
\subsection{Learnability}\label{subsec:learnability}

In learning theory, examples are pieces of
information that characterise an abstract target the learner wants to learn.  
We consider the problem of learning targets represented in 
decidable fragments of 
FO logic 
or in their  possibilistic extensions.
Examples in our case are
formulas expressed in the chosen logic (in this context called `entailments'). 

A \emph{learning framework} $\Fmf$ is a pair 
$(\examples, \hypothesisSpace)$; where $\examples$ 
is a non-empty  and countable set of \emph{examples}, and  $\hypothesisSpace$ is a non-empty and countable set of \emph{ concept representations} (also called \emph{hypothesis space}). 
Each element $l$ of $\hypothesisSpace$ is assumed to be represented using 
a finite set of symbols $\Sigma_l$ (the \emph{signature} of $l$). In all learning frameworks considered in this work, $\examples$ is a set of formulas and $\hypothesisSpace$ 
is a set
of KBs
(in a chosen language).
We say that $\e\in \examples$ is a \emph{positive example} for $l \in \hypothesisSpace$ if $l\models\e$ and a
\emph{negative example} for $l$ if $l\not\models\e$. 
Given a learning framework $\Fmf = (\examples, \hypothesisSpace)$, we are interested in the
exact identification of a \emph{target}   $\target\in\hypothesisSpace$,
by posing queries to oracles.
Let  ${\sf MQ}_{\Fmf,\target}$ be the oracle that takes as input some $\e \in \examples$ and
returns `yes' if $\target\models\e$ and `no' otherwise. 
A \emph{membership query} is a call to the oracle ${\sf MQ}_{\Fmf,\target}$.
Given $t,h\in\hypothesisSpace$,
a \emph{counterexample} for $t$ and $h$ is an example $\e\in\examples$ s.t. 
$t\models \e$ and $h\not\models \e$ (or vice-versa, $h\models \e$ and $t\not\models \e$).
For every $\target \in \hypothesisSpace$, we denote by ${\sf EQ}_{\Fmf,\target}$ an oracle
that takes as input a \emph{hypothesis} $h \in \hypothesisSpace$
and returns `yes' if $h\equiv\target$ and a counterexample 
otherwise.
There is no assumption regarding which counterexample  
is chosen by the oracle.  
An \emph{equivalence query} is a call to  ${\sf EQ}_{\Fmf,\target}$.

\begin{example}\upshape		A blood test to check for vitamin B12 deficiency
	on patient $42$ can
	 	 			be modelled with a call
	to ${\sf MQ}_{\Fmf,\target}$ with 	$ ({\sf B12Def}({\sf patient\_42}),\alpha)$ for some $\alpha\in (0,1]$
		as input (depending on the result and accuracy of the test). \hfill {\mbox{$\triangleleft$}}
\end{example}

A \emph{learner} 
for $\Fmf=(\examples,\hypothesisSpace)$ is a deterministic  algorithm that, for a fixed but arbitrary $\target\in\hypothesisSpace$, 
takes $\Sigma_t$  as input, is allowed to pose queries to ${\sf MQ}_{\Fmf,\target}$ and
${\sf EQ}_{\Fmf,\target}$ (without knowing  the target $\target$), and that eventually halts and outputs some $h\in\hypothesisSpace$ with
$h\equiv\target$.
This notion of an algorithm   with access to oracles  can be formalised using 
\emph{learning systems}~\cite{watanabe90}, where posing a query to 
an oracle means writing down 
the query in an (additional) communication tape, entering in a query state, and waiting.
The oracle then
writes the answer in the communication tape, enters in an answer state, and stops.
After that, the learner resumes its execution and can now read the answer in the communication tape.

We say that \Fmf is (exactly) \emph{learnable} if there is a learner   for \Fmf and that \Fmf is 
  \emph{polynomial time learnable} if it is learnable by a
learner $A$ such that at every step (the time used by an oracle to write an answer is \emph{not} taken into account) 
of computation the time used by $A$  up to
that step is bounded by a polynomial $p(|\target|,|\e|)$, where $\target\in \hypothesisSpace$
is the target and $\e \in \examples$ is the largest counterexample seen so far.
We denote by \PTimeL 
the class of learning frameworks
which are
 polynomial time 
 learnable 
  and the complexity of the entailment problem is in \PTime
 \footnote{In general, non-trivial algorithms
 need to perform entailment checks to combine
 the information of the examples.
 So polynomial time learning algorithms are normally for logics
in which the entailment problem is tractable.
This is the case e.g. for the Horn results
mentioned in the Introduction.}.
 We also consider cases in which the learner can only 
pose one type of query (only membership or only equivalence queries).
Whenever this is the case we write this explicitly.

	Let $\Fmf=(\examples,\hypothesisSpace)$ be a  learning framework
	where
	$\examples$ is a set of FO formulas and
	$\hypothesisSpace$ is a set of
	FO KBs. 
	We call such \Fmf  an \emph{FO learning framework}.
	We say that $\Fmf$ is \emph{non-trivial} 
	if $\hypothesisSpace$ contains a non-trivial FO KB; and 
	that it is \emph{\safe} if  $l\in\hypothesisSpace$ implies that
	 $l'\in \hypothesisSpace $, for all $l'\subseteq l$.
	 		A \emph{possibilistic extension} $l_\pi$ of an FO 
KB
$l$ is  a possibilistic KB obtained by adding a possibilistic valuation $ \alpha  $
	to every formula  $ \varphi \in l$. 
		The possibilistic extension $\Fmf_\pi$ of \Fmf 
is the pair $(\examples_\pi,\hypothesisSpace_\pi)$ where 
$\hypothesisSpace_\pi$
is the set of all possibilistic extensions of each $l\in \hypothesisSpace$, 
and $ \examples_\pi $ is the set of all possibilistic formulas entailed by an element of $\hypothesisSpace_\pi$.

We write $\mathbb{N}^+$ for the set of positive natural numbers.
Given 
$p\in \mathbb{N}^+$, we denote by  $\Fmf_\pi^p =(\examples_\pi,\hypothesisSpace^p_\pi)$  the result of removing from  $\hypothesisSpace_\pi$ in $\Fmf_\pi$
	 every $l\in \hypothesisSpace_\pi $ that does not satisfy $ \p(l) = p $.

\begin{remark}\label{remark:membership}	Let $\Fmf=(\examples,\hypothesisSpace)$ be an FO learning framework and let $t\in\hypothesisSpace$ be the target. 
If a learner $A$ has access to ${\sf MQ}_{\Fmf,\target}$
	then 	we can  assume w.l.o.g. that all counterexamples returned by  ${\sf EQ}_{\Fmf,\target}$ are \emph{positive}:
	the learner can  check whether each  $\phi\in h$
	is entailed by $t$. The same holds for $\Fmf_\pi$.
\end{remark}

\section{Learnability Results} \label{sec:learnability}
We start by studying the problem of whether 
there is a learner for a learning framework such that it always terminates with
a hypothesis equivalent to the target. The main difficulty in learning with only membership queries (even for plain FO settings)
is that  the learner would 
`not know' whether it has found a formula 
equivalent to a (non-trivial) target. 
\begin{example}\label{e:mq:nontrivialconsequence}\upshape 	Let $\Phi_n := \exists  x_1 \ldots \exists  x_n .\bigwedge_{0\leq i < n}r(x_i,x_{i+1})$.
													A learner may ask membership queries  of the form  $\exists x_0\Phi_n$ 	for an arbitrarily large $n$  without being able to distinguish whether  the target 
	theory is $\exists x_0\Phi_n$ or $\forall x_0(\Phi_n\rightarrow \Phi_{n+1})$ (knowing the signature of the target theory 
	does not help the learner). \hfill {\mbox{$\triangleleft$}} 
\end{example}

For possibilistic theories, another difficulty arises even for the propositional case. As
the precision of a formula can be
arbitrarily high, the learner may not know when to stop (e.g.,
is the target $(p,0.1)$? or $(p,0.11)$?). 
Theorem~\ref{thm:memonly} states that, except for trivial cases, learnability cannot be guaranteed. 

\begin{restatable}{theorem}{memonly}\label{thm:memonly}
				Let \Fmf be a non-trivial FO learning framework.
	$\Fmf_\pi$ 
	is not (exactly) 
	learnable with only
	membership queries.
\end{restatable}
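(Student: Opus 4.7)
My plan is an adversary argument establishing that, with only membership queries, a learner can never pin down the valuations of a non-trivial possibilistic target.

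By the non-triviality hypothesis, I would fix a non-trivial $l^\ast \in \hypothesisSpace$. Since $l^\ast$ is falsifiable, $\bigwedge l^\ast$ is not a tautology, so at least one formula $\varphi^\ast \in l^\ast$ is not a tautology; clearly $l^\ast \models \varphi^\ast$. For each $\alpha \in (0,1]$ of finite precision, let $t_\alpha$ denote the possibilistic extension of $l^\ast$ that assigns $\alpha$ to every formula; note $t_\alpha \in \hypothesisSpace_\pi$. Because all valuations in $t_\alpha$ equal $\alpha$, the FO projection of the $\gamma$-cut, $(t_\alpha)^\ast_\gamma$, equals $l^\ast$ when $\gamma \leq \alpha$ and $\emptyset$ when $\gamma > \alpha$. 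Applying Lemma~\ref{p:propositions}(1) then splits the answer of any membership query $(\psi,\gamma)$ on $t_\alpha$ into three mutually exclusive cases: (i) ``yes'' when $\psi$ is a tautology; (ii) ``yes iff $\gamma \leq \alpha$'' when $\psi$ is not a tautology and $l^\ast \models \psi$; (iii) ``no'' when $\psi$ is not a tautology and $l^\ast \not\models \psi$. In particular, whenever $\alpha < \alpha'$ the targets $t_\alpha$ and $t_{\alpha'}$ disagree on $(\varphi^\ast,\alpha')$, so they are inequivalent.

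Next, I would design an adversary maintaining an open interval $(a,b) \subseteq (0,1]$ of still-viable uniform valuations, initialised to $(0,1)$. On each query $(\psi,\gamma)$: in cases (i) and (iii) the answer is forced and consistent with every $t_\alpha$, leaving $(a,b)$ unchanged; in case (ii) the adversary returns ``yes'' (keeping $(a,b)$) when $\gamma \leq a$, ``no'' (keeping $(a,b)$) when $\gamma \geq b$, and otherwise returns ``no'' and updates $(a,b)$ to $(a,\gamma)$. A short induction shows $(a,b)$ is always a non-empty open interval and every given answer agrees with every $t_\alpha$ whose $\alpha$ lies in the current interval.

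To finish, I would assume for contradiction a learner $A$ using only membership queries and run it against this adversary. It halts after finitely many queries with some hypothesis $h \in \hypothesisSpace_\pi$. Since $(a,b)$ is a non-empty open subset of $(0,1]$, it contains two distinct finite-precision reals $\alpha < \alpha'$; both $t_\alpha$ and $t_{\alpha'}$ would have produced exactly the same transcript and hence the same output $h$, yet $t_\alpha \not\equiv t_{\alpha'}$, so $h$ is wrong in at least one case, contradicting correctness of $A$. The main obstacle I anticipate is verifying case (ii) of the adversary strategy: it is essential that the uniform-valuation family $\{t_\alpha\}$ interacts with an arbitrary query only through the comparison of $\gamma$ with $\alpha$, and this cleanness is precisely what the cut characterisation in Lemma~\ref{p:propositions}(1) buys, making it the load-bearing element of the argument.
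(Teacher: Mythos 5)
Your core construction is sound and in fact fills in the paper's two-sentence sketch with the right ingredients: the family $t_\alpha$ of uniform-valuation extensions of a single non-trivial $l^\ast\in\hypothesisSpace$, the observation that $(t_\alpha)^\ast_\gamma$ is $l^\ast$ or $\emptyset$ according to whether $\gamma\leq\alpha$, the resulting three-case classification of membership answers via Lemma~\ref{p:propositions}(1), and the inequivalence of $t_\alpha$ and $t_{\alpha'}$ witnessed by $(\varphi^\ast,\alpha')$. All of that checks out (and the learner's input $\Sigma_t$ is the same for every $t_\alpha$, so determinism does give identical runs under identical answers).

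There is, however, one genuine gap: the sentence ``run it against this adversary; it halts after finitely many queries.'' The definition of a learner only guarantees termination when $A$ interacts with the honest oracle $\MQc{\Fmf_\pi}{t}$ for some fixed target $t\in\hypothesisSpace_\pi$; it guarantees nothing about runs against an adversary. Your adversary only ever shrinks the right endpoint, so after infinitely many case-(ii) queries the surviving intervals are $(0,b_1)\supseteq(0,b_2)\supseteq\cdots$ with possibly $b_n\to 0$, in which case the infinite transcript is consistent with \emph{no} target $t_\alpha$ at all. Then $A$ may legitimately fail to halt against the adversary without failing on any actual target, and no contradiction follows. The repair is easy and uses exactly your case analysis, so the adversary is not needed: run $A$ against the honest oracle for $t_{1/2}$, which must halt after finitely many queries; among the case-(ii) queries let $m$ be the least valuation exceeding $1/2$ (or $m=1$ if none, noting $1\notin(1/2,1)$ is then handled by taking any finite-precision $\alpha'\in(1/2,m)$); every query receives the same answer from $t_{\alpha'}$ as from $t_{1/2}$, so $A$ outputs the same hypothesis on both, yet $t_{1/2}\not\equiv t_{\alpha'}$. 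With that replacement your argument is complete and is essentially the paper's intended proof, carried out in more detail than the published sketch.
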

\begin{proof}[Sketch]
	The existence of a learner $A$ for the possibilistic extension $\Fmf_\pi=(\examples_\pi,\hypothesisSpace_\pi)$ of a non-trivial learning framework \Fmf
	would imply the existence of a procedure that 
	terminates in $n$ steps. $A$ 	would not distinguish between 
	the elements of $\hypothesisSpace_\pi$ with precision higher than $n$. 
\end{proof}

If the precision of the target is known or fixed, 
  learnability of an FO learning framework 
can be transferred to its possibilistic extension.
We state this in Theorem~\ref{thm:memonlyprec}.
To show this theorem, we use the following 
technical result.

				\begin{lemma}
	\label{l:fundamentalidea}
				Let $t$ be a possibilistic KB. 
	Let $I$ be a set of valuations such that $t^v\subseteq I$. 
	 If for each
	$ \alpha\in I$ there is some FO KB $k_\alpha^\ast$ such that 
	$ k_\alpha^\ast \equiv t_\alpha^\ast$
	then  
	$ t \equiv \{ (\phi,\alpha) \mid 
	\phi \in k_\alpha^\ast, \alpha\in I\}  $. 
		\end{lemma}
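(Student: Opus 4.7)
The plan is to reduce the equivalence of the two possibilistic KBs to a family of FO-equivalences of their $\beta$-cuts via Lemma~\ref{p:propositions}(1), and then to exploit the hypothesis $t^v \subseteq I$ to match cut by cut. Write $k = \{(\phi,\alpha) \mid \phi \in k_\alpha^\ast,\, \alpha \in I\}$ for the candidate possibilistic KB on the right-hand side. By Lemma~\ref{p:propositions}(1) applied to both $t$ and $k$, it suffices to prove $t_\beta^\ast \equiv k_\beta^\ast$ as FO theories for every $\beta \in (0,1]$; from this, $t \models (\phi,\beta) \iff t_\beta^\ast \models \phi \iff k_\beta^\ast \models \phi \iff k \models (\phi,\beta)$ for every possibilistic formula, giving $t \equiv k$.

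My first step will be to unfold $k_\beta^\ast$ directly from the construction and observe that $k_\beta^\ast = \bigcup_{\alpha \in I,\, \alpha \geq \beta} k_\alpha^\ast$ as a set of FO formulas; the hypothesis $k_\alpha^\ast \equiv t_\alpha^\ast$ for each $\alpha \in I$ then yields $k_\beta^\ast \equiv \bigcup_{\alpha \in I,\, \alpha \geq \beta} t_\alpha^\ast$ (a union of theories is invariant under replacing each by an equivalent one, since the models of a union are the intersection of the individual model classes). The second step is to verify the set equality $\bigcup_{\alpha \in I,\, \alpha \geq \beta} t_\alpha^\ast = t_\beta^\ast$: the inclusion $\subseteq$ follows from monotonicity of cuts ($\alpha \geq \beta$ implies $t_\alpha^\ast \subseteq t_\beta^\ast$), while for $\supseteq$ any $\phi \in t_\beta^\ast$ comes from some $(\phi,\gamma) \in t$ with $\gamma \geq \beta$, and setting $\alpha := \gamma$ places $\phi$ into $t_\alpha^\ast$ inside the union --- provided $\gamma \in I$, which is exactly what $t^v \subseteq I$ guarantees.

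The only subtle point, and the one where the hypothesis is genuinely needed, is that reverse inclusion: if $I$ omitted even a single valuation actually occurring in $t$, the corresponding formula would fail to be inserted by the construction at the correct level, and $k_\beta^\ast \equiv t_\beta^\ast$ could break at that $\beta$. Everything else is routine chaining of the three equivalences above together with Lemma~\ref{p:propositions}(1) to transfer the cut-wise equivalences back to $t \equiv k$.
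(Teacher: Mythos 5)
Your proof is correct, and it takes a genuinely different (and arguably cleaner) route than the paper's. The paper argues formula-by-formula: given $h\models(\phi,\gamma)$ it invokes Points~2 and~3 of Lemma~\ref{p:propositions} to locate $\alpha=\val(\phi,h)$ in $h^v\cup\{1\}=I\cup\{1\}$, handles the stray case $\alpha=1\notin I$ (tautologies) separately, and only then applies Point~1 at that single level $\alpha$ before transferring back. You instead prove the stronger, uniform statement that \emph{every} cut satisfies $k_\beta^\ast\equiv t_\beta^\ast$, via the purely set-theoretic identities $k_\beta^\ast=\bigcup_{\alpha\in I,\alpha\geq\beta}k_\alpha^\ast$ and $\bigcup_{\alpha\in I,\alpha\geq\beta}t_\alpha^\ast=t_\beta^\ast$ (the latter being exactly where $t^v\subseteq I$ is used, as you correctly flag), and then need only Point~1 to conclude. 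This buys you a shorter argument with no case split and no appeal to $\val$ or to Points~2--3; it also silently covers the empty-cut case, where both unions and both cuts are the empty theory. Two small presentational cautions: your symbol $k_\beta^\ast$ is overloaded when $\beta\in I$ (it denotes both the hypothesized FO KB at level $\beta$ and the projection of the $\beta$-cut of the assembled possibilistic KB $k$, which is the possibly larger union $\bigcup_{\alpha\in I,\alpha\ge\beta}k_\alpha^\ast$) --- your unfolding makes the intended reading clear, but a distinct name would be safer; and your reduction from ``same cuts up to equivalence'' to $t\equiv k$ implicitly uses that possibilistic KBs entailing the same possibilistic formulas are equivalent, which is immediate but worth a clause.
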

\begin{proof}
Let $h=\{ (\phi,\alpha) \mid 
	\phi \in k_\alpha^\ast, \alpha\in I\}$. 	Assume $ h \models (\phi,\gamma) $. 
	If $\gamma=1$ and $\gamma\not\in I$ then 	$\phi$ is a tautology. In this case, for all $\beta\in (0,1]$, $ t \models (\phi,\beta) $.
	Suppose this is not the case.
	By Points~2 and~3 of Lemma~\ref{p:propositions},
	$ \gamma \leq \alpha$,
	 $\alpha = \val(\phi,h)
	\in h^v \cup \{1\}$. Also, $ h \models (\phi,\alpha) $. 
	By construction of $ h $, $h^v=I$, so $\alpha\in I$.
							Moreover, 
	for every $ \beta\in I$, we know that $ h^*_\beta = k^*_\beta $.
	Therefore $ k^*_\alpha \equiv h^*_\alpha $.
	By Point~1 of Lemma~\ref{p:propositions},
	$ h \models (\phi,\alpha) $ implies 
	$ h_\alpha^* \models \phi $. 
	Then, 
			$ k_\alpha^\ast \models \phi $.
	As $ k_\alpha^\ast \equiv t_\alpha^\ast$, we have that 
		$ t_\alpha^\ast \models \phi $.
	Again by Point~1 (of Lemma~\ref{p:propositions}), 
	$ t_\alpha^\ast \models \phi $ iff $ t \models (\phi,\alpha) $.
	Since $\alpha\geq\gamma$, $ t \models (\phi,\gamma) $ by Point~2. 	The other direction can be proved similarly.
\end{proof}

\begin{restatable}{theorem}{memonlyprec}\label{thm:memonlyprec} 	Suppose \Fmf is an 
	FO learning framework that is learnable  with only membership queries.  
	For all $p\in\mathbb{N}^+$,  $\Fmf^p_\pi=(\examples_\pi,\hypothesisSpace^p_\pi)$ 	is   	learnable with only
	membership queries.
\end{restatable}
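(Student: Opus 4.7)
The plan is to reduce learning a possibilistic target $t\in\hypothesisSpace_\pi^p$ to learning each of its $\alpha$-cuts separately. The key observation is that since $\p(t)=p$, every valuation occurring in $t$ has precision at most $p$, so it lies in the \emph{finite} set $V_p=\{\alpha\in(0,1]\mid \p(\alpha)\le p\}$; in particular $t^v\subseteq V_p$. Because $p$ is fixed in advance, the learner can enumerate $V_p$ explicitly.

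For each $\alpha\in V_p$ in turn, I would invoke the given membership-only learner $A$ for $\Fmf$, with input signature $\Sigma_t$, to identify the FO projection $t_\alpha^*$. The crucial step is the oracle simulation: every membership query $\phi$ that $A$ would pose to ${\sf MQ}_{\Fmf,t_\alpha^*}$ is answered by a single call to ${\sf MQ}_{\Fmf_\pi^p,t}$ on the possibilistic example $(\phi,\alpha)$, since Point~1 of Lemma~\ref{p:propositions} yields $t_\alpha^*\models\phi$ iff $t\models(\phi,\alpha)$. Since $A$ halts on every target in $\hypothesisSpace$ and $V_p$ is finite, the outer loop terminates with FO KBs $k_\alpha^*$ satisfying $k_\alpha^*\equiv t_\alpha^*$ for every $\alpha\in V_p$. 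The algorithm then outputs $h=\{(\phi,\alpha)\mid \phi\in k_\alpha^*,\ \alpha\in V_p\}$; applying Lemma~\ref{l:fundamentalidea} with $I=V_p\supseteq t^v$ gives $h\equiv t$, which is exactly the identification requirement.

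The main obstacle is conceptual rather than technical: one must ensure that for each $\alpha\in V_p$ the cut $t_\alpha^*$ is a legal target for $A$, i.e.\ that $t_\alpha^*\in\hypothesisSpace$, and, symmetrically, that the reassembled $h$ lies in $\hypothesisSpace_\pi^p$. Both requirements are immediate for \safe frameworks, since every subset of $t^*$ is then in $\hypothesisSpace$; if $\p(h)<p$ for trivial syntactic reasons, $h$ can be padded with a tautological possibilistic formula of precision $p$. A minor detail is that $A$ expects a signature: passing $\Sigma_t$ is sound because $\Sigma_{t_\alpha^*}\subseteq\Sigma_t$. Because the statement claims only learnability and not polynomial-time learnability, the fact that $|V_p|$ may be as large as $10^p$ is irrelevant—only correctness and termination are required, and both follow from the argument above.
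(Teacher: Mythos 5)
Your proof is correct and essentially identical to the paper's: both run one instance of the membership-only learner for each valuation of precision at most $p$, answer its queries to ${\sf MQ}_{\Fmf,t^*_\alpha}$ by querying ${\sf MQ}_{\Fmf_\pi,t}$ on $(\phi,\alpha)$ via Point~1 of Lemma~\ref{p:propositions}, and reassemble the outputs using Lemma~\ref{l:fundamentalidea}. Your side remark that each cut $t^*_\alpha$ must be a legal target for $A$ (i.e.\ equivalent to some member of $\hypothesisSpace$) flags a genuine subtlety that the paper's proof passes over without comment --- the paper only imposes the \safe{} condition later, in Theorem~\ref{thm:memb} --- so your explicit appeal to safety (or padding) is a reasonable, if slightly stronger, hypothesis.
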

\begin{proof}
	Let $ A $ be a 
		learner
	for \Fmf and let $ t\in \hypothesisSpace_\pi^p $ be the target.
	For each $ \alpha\in (0,1]_p $, 	we run an instance of $ A $, denoted $ A_\alpha $. 		Whenever  $ A_\alpha$ calls 	$ \MQc{\Fmf}{t^*_\alpha} $ 
	with $ \phi $ as input,
	we call 
	$ \MQc{\Fmf_\pi}{t} $ 
	with $ (\phi,\alpha) $ as input.
	By Point~1 of Lemma~\ref{p:propositions},
	$ \MQc{\Fmf}{t^*_\alpha}(\phi) = \MQc{\Fmf_\pi}{t}(\phi,\alpha) $.
	Since $ A $ is a learner for \Fmf, every $ A_\alpha $ eventually halts and 
	outputs  a hypothesis 
	$ k_\alpha^* $ 	such that
	$ k_\alpha^* \equiv t^*_\alpha$.
	Since $ t\in \hypothesisSpace_\pi^p $, 
	$t^v\subseteq (0,1]_p$.
			By Lemma~\ref{l:fundamentalidea},
	$ t \equiv \{ (\phi,\alpha) \mid 
	\phi \in k_\alpha^\ast, \alpha\in (0,1]_p\}  $. 
			Thus, we can transfer learnability of \Fmf (with only membership queries)
	 to $ \Fmf_\pi^p$.
\end{proof}

If, e.g., $\MQc{\Fmf_\pi}{t}((\phi,0.01) )=$ `yes',
 $\MQc{\Fmf_\pi}{t}((\phi,0.02))=$ `no', and the precision of the target is $ 2 $,  then $ \val(\phi,t) = 0.01 $.
So, knowing the precision is important for 
learning with membership queries only. 
If equivalence queries are allowed then 
a learner can build a hypothesis equivalent to  
the target \emph{without knowing the precision} in advance by
simply enumerating all possible hypothesis and asking them
to the oracle, one by one (Theorem~\ref{thm:eqonlyq}).

\begin{restatable}{theorem}{memonly}\label{thm:eqonlyq}
		The possibilistic extension $ \Fmf_\pi $ of an FO learning framework \Fmf
	is   	learnable with only
	equivalence queries.
\end{restatable}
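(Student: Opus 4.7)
The plan is to exploit the fact that, although valuations can have arbitrarily high precision, the overall hypothesis space $\hypothesisSpace_\pi$ remains countable and admits an effective enumeration; equivalence queries then let the learner verify candidates one by one until the target is found.

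First, I would argue that $\hypothesisSpace_\pi$ is countable. By definition of a learning framework, $\hypothesisSpace$ is countable and each $l\in\hypothesisSpace$ is a finite set of formulas over a finite signature $\Sigma_l$. A possibilistic extension of $l$ assigns to each $\varphi\in l$ a valuation $\alpha\in(0,1]$ of finite precision, i.e., a rational of the form $k/10^p$ for some $k,p\in\mathbb{N}^+$. Hence each $l\in\hypothesisSpace$ has only countably many possibilistic extensions, and $\hypothesisSpace_\pi$ is a countable union of countable sets, thus countable. Fix any computable enumeration $h_1,h_2,\ldots$ of $\hypothesisSpace_\pi$ (this is standard given the finitary syntactic representation).

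Next, I would describe the learner $A$ for $\Fmf_\pi$: on input $\Sigma_t$, it iterates $i = 1, 2, \ldots$ and at step $i$ poses the equivalence query $\EQc{\Fmf_\pi}{t}(h_i)$. If the oracle returns \textsf{yes}, $A$ outputs $h_i$ and halts; otherwise it proceeds to $i+1$. Since $t \in \hypothesisSpace_\pi$ by assumption, there exists some $k$ with $h_k \equiv t$, so the procedure halts at step $k$ at the latest and outputs a hypothesis equivalent to $t$. This establishes learnability of $\Fmf_\pi$ using only equivalence queries.

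I do not foresee a serious technical obstacle here: the argument is essentially the naive enumeration strategy available whenever equivalence queries are allowed, and the only point to be careful about is the countability/enumerability of $\hypothesisSpace_\pi$, which holds because valuations have finite precision. Note that this learner is not polynomial time, consistent with the earlier discussion that leaves the polynomial equivalence-only case open; the statement here only asserts (exact) learnability.
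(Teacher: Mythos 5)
Your proposal is correct and matches the paper's argument: the paper justifies Theorem~\ref{thm:eqonlyq} precisely by the enumeration strategy, noting that a learner can build an equivalent hypothesis without knowing the precision in advance by enumerating all possible hypotheses and posing them as equivalence queries one by one. Your additional care in verifying that $\hypothesisSpace_\pi$ is countable (because valuations have finite precision) is a detail the paper leaves implicit, but the route is the same.
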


If both membership and equivalence query oracles
are available,
learnability is guaranteed by the previous  theorem. 

\begin{restatable}{corollary}{memonly}\label{thm:eqonly} 
Let \Fmf be an FO learning framework. 
	\Fmf is learnable iff $ \Fmf_\pi $ is learnable.
\end{restatable}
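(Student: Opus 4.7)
The plan is as follows. The forward direction ($\Rightarrow$) is immediate from Theorem~\ref{thm:eqonlyq}: that theorem shows $\Fmf_\pi$ is learnable with only equivalence queries, hence also when both kinds of queries are available, independently of whether \Fmf is learnable. So the assumption on \Fmf is not even used here.

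For the converse direction ($\Leftarrow$), I plan to construct a learner $A$ for \Fmf from an assumed learner $A_\pi$ for $\Fmf_\pi$. Given a target $t \in \hypothesisSpace$, define its ``crisp'' possibilistic extension $t_\pi := \{(\varphi,1) \mid \varphi \in t\}$, which by definition belongs to $\hypothesisSpace_\pi$, and run $A_\pi$ on $t_\pi$, simulating its queries using the oracles for $t$. A membership query $(\varphi,\alpha)$ is answered by calling $\MQc{\Fmf}{t}(\varphi)$: since every formula in $t_\pi$ has valuation $1 \geq \alpha$, Point~1 of Lemma~\ref{p:propositions} yields $t_\pi \models (\varphi,\alpha)$ iff $(t_\pi)^*_\alpha \models \varphi$ iff $t \models \varphi$. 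An equivalence query on a hypothesis $h_\pi$ is first delegated to $\EQc{\Fmf}{t}(h_\pi^*)$: if the oracle returns a counterexample $\varphi$ with $h_\pi^* \models \varphi$ and $t \not\models \varphi$, I return $(\varphi,\min h_\pi^v)$, and if it returns $\varphi$ with $t \models \varphi$ and $h_\pi^* \not\models \varphi$, I return $(\varphi,1)$; in both sub-cases the returned pair is a genuine possibilistic counterexample to $h_\pi \equiv t_\pi$ by Lemma~\ref{p:propositions}. When the classical oracle replies ``yes'', one additionally checks that $(h_\pi)^*_1 \equiv h_\pi^*$; if some $\varphi \in h_\pi^*$ fails this test then $(\varphi,1)$ is a witness that $h_\pi \not\equiv t_\pi$. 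Upon $A_\pi$'s eventual termination with $h_\pi \equiv t_\pi$, the output $h_\pi^* \in \hypothesisSpace$ satisfies $h_\pi^* \equiv t$.

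The main obstacle is the last redundancy check, which requires an entailment test in \Fmf that is not directly available through the oracles. When the entailment problem of \Fmf is decidable this step can simply be executed algorithmically inside $A$; otherwise, the corollary can still be established by the cleaner observation that \Fmf is itself trivially learnable using only equivalence queries via enumeration of the (countable) set of elements of $\hypothesisSpace$ with signature contained in the finite $\Sigma_t$ — exactly the same enumeration strategy that underpins the proof of Theorem~\ref{thm:eqonlyq} — which gives the converse direction unconditionally.
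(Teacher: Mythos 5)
Your proposal is correct and, in its final form, coincides with the paper's own argument: the corollary is presented as an immediate consequence of Theorem~\ref{thm:eqonlyq}, because both \Fmf and $\Fmf_\pi$ are always learnable by enumerating the countable hypothesis space and posing equivalence queries, so both sides of the ``iff'' hold unconditionally. The oracle-simulation reduction you sketch for the converse is an unnecessary detour (and, as you yourself note, its redundancy check on the $1$-cut requires an entailment test that the oracles do not supply), but your fallback to the enumeration argument is exactly what the paper relies on and suffices.
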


\section{Polynomial Time Reduction}\label{sec:reduction}
We now investigate whether 
 results showing that an FO learning framework is in \PTimeL can be transferred  
to
their possibilistic extensions and vice-versa.
Theorem~\ref{t:posstoclassical} shows
the transferability of 
\PTimeL membership from the possibilistic extension $ \Fmf_\pi $  of an FO learning framework \Fmf 
to \Fmf.

\begin{theorem}\label{t:posstoclassical}
Let \Fmf be an FO learning framework. 
	If $ \Fmf_\pi $ is in \PTimeL
	then $ \Fmf $ is in \PTimeL.
\end{theorem}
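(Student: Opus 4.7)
The plan is to turn a polynomial-time learner $A_\pi$ for $\Fmf_\pi$ into a polynomial-time learner $A$ for $\Fmf$ by simulation. Given an FO target $t \in \hypothesisSpace$, the learner $A$ will internally pretend $A_\pi$'s target is the ``flat'' possibilistic extension $t_\pi := \{(\phi,1) \mid \phi \in t\}$, which lies in $\hypothesisSpace_\pi$ by construction and has size linear in $|t|$. By Point~1 of Lemma~\ref{p:propositions}, $(t_\pi)_\alpha^* = t$ for all $\alpha \in (0,1]$, hence $t_\pi \models (\phi,\alpha)$ iff $t \models \phi$. So any membership query $(\phi,\alpha)$ issued by $A_\pi$ is answered by forwarding $\phi$ to $\MQc{\Fmf}{t}$.

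For equivalence queries, whenever $A_\pi$ proposes $h_\pi \in \hypothesisSpace_\pi$, the FO projection $h_\pi^*$ lies in $\hypothesisSpace$ (since $\hypothesisSpace_\pi$ is built from possibilistic extensions of $\hypothesisSpace$), so I would pose it to $\EQc{\Fmf}{t}$. If the FO oracle returns a counterexample $\phi$, two cases arise: if $t \models \phi$ and $h_\pi^* \not\models \phi$, then $h_{\pi,1}^* \subseteq h_\pi^*$ also fails $\phi$, so $(\phi,1)$ separates $t_\pi$ from $h_\pi$; if instead $h_\pi^* \models \phi$ and $t \not\models \phi$, taking $\alpha_0 := \min h_\pi^v$ gives $h_{\pi,\alpha_0}^* = h_\pi^* \models \phi$, so $(\phi,\alpha_0)$ separates them. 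The appropriate possibilistic counterexample is then handed back to $A_\pi$. If the FO oracle returns ``yes'', so $h_\pi^* \equiv t$, then $A$ iterates through every $(\phi,\beta) \in h_\pi$ with $\beta < 1$ and uses the polynomial-time FO entailment test (available because \PTimeL requires entailment to be in \PTime) to check whether $h_{\pi,1}^* \models \phi$; if some such $\phi$ fails this check, then $t \models \phi$ while $h_\pi \not\models (\phi,1)$, and $(\phi,1)$ is returned to $A_\pi$. Otherwise $h_{\pi,\alpha}^*$ collapses to $h_{\pi,1}^* \equiv t$ for every $\alpha \in (0,1]$, giving $h_\pi \equiv t_\pi$, so $A$ halts and outputs $h_\pi^* \in \hypothesisSpace$, which is equivalent to $t$.

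For the bookkeeping, every possibilistic counterexample fed to $A_\pi$ has the form $(\phi,\beta)$ where $\phi$ either came from the FO oracle or sits inside the polynomial-sized hypothesis $h_\pi$, so its size stays polynomial in the largest FO counterexample and in $|t|$. Each simulated step of $A_\pi$ costs $A$ at most one FO oracle call plus polynomially many polynomial-time FO entailment checks, and $|t_\pi|$ is linear in $|t|$, so $A_\pi$'s polynomial bound translates into a polynomial bound for $A$. The hard part will be the third case above: when $h_\pi^* \equiv t$ but $h_\pi$ still carries valuations below $1$, a single FO-oracle call cannot detect the discrepancy, and one must manufacture the possibilistic counterexample from scratch; the \PTime entailment assumption built into \PTimeL is precisely what enables this step to be carried out within the required time budget.
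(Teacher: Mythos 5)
Your proposal is correct and follows essentially the same route as the paper: simulate the possibilistic learner on the flat target $\{(\phi,1)\mid\phi\in t\}$, forward membership queries via Point~1 of Lemma~\ref{p:propositions} (Claim~\ref{l:target2poss} in the paper), project hypotheses to answer equivalence queries, and lift positive counterexamples to valuation~$1$. The only divergence is that the paper sidesteps your two extra cases---it invokes Remark~\ref{remark:membership} to assume all counterexamples are positive, and it can simply halt and output $h_\pi^*$ as soon as the FO oracle accepts it (which already solves the FO learning task)---so the ``hard part'' you identify, manufacturing a possibilistic counterexample when $h_\pi^*\equiv t$ but $h_\pi$ is not equivalent to the flat target, is handled correctly in your write-up but is not actually needed.
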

\begin{proof}
In our proof, we use the following claim.
\begin{claim}\label{l:target2poss}
		Let $k$ be an FO KB and let $ t$ be the possibilistic KB $ \{ (\phi,1)\mid \phi\in k \} $.
	For all $(\phi,\alpha)$, 	 $ k\models \phi $
	iff $ t\models (\phi,\alpha) $.
	\end{claim}
\begin{proof}
		If 	$ t \models(\phi,\alpha) $,
	since $ t^* \models t^*_\alpha $ and $ k = t^* $,
			$ k\models\phi $.
	If 	$ k\models \phi$,
	by construction $ t^*_1\models\phi $.
	By Point 1 of Lemma~\ref{p:propositions},
	$ t^*_1\models\phi $ iff  $ t\models(\phi,1) $, 
	so, for all $\alpha\in (0,1]$, $ t\models(\phi,\alpha) $.
\end{proof}

	Let $\Fmf=(\examples, \hypothesisSpace)$ and let $ k \in\hypothesisSpace$  be the target.
	Since $ \Fmf_\pi $ is in \PTimeL,
	there is a learner $ A_\pi $ for $ \Fmf_\pi $.
	We start the execution of $ A_\pi $ that attempts to learn a hypothesis $ h $ 	equivalent to $ t=\{ (\phi,1)\mid \phi\in k \} $.
		By Claim~\ref{l:target2poss}, for all $\alpha\in (0,1]$, 
		$ \MQc{\Fmf_\pi}{t} ( (\phi,\alpha)) =\MQc{\Fmf}{k} (\phi)$. 
				Also, 
	we can simulate a call to \EQc{\Fmf_\pi}{t} with  $ h $ as input
	by calling \EQc{\Fmf}{k} with  $ h^* $ as input.
	By Claim~\ref{l:target2poss},		for all $\alpha\in (0,1]$, $ k\models \phi $ iff $ t\models (\phi,\alpha) $, 
		in particular, for $\alpha=1$.
	            By Remark~\ref{remark:membership}, we can assume that all counterexamples returned by $\EQc{\Fmf}{k}$ are positive. 
	Whenever we receive a (positive) counterexample $ \phi $,
    we return $ (\phi,1) $ to $ A_\pi $. 
        									Eventually, $ A_\pi $ will output a hypothesis $ h\equiv t $ 
		in polynomial time w.r.t. $ |t| $ and the largest counterexample received so far.
	Clearly, $h^*$ is as required. 				\end{proof}

By 
Theorem~\ref{t:existspossnotlearnable}, the converse of Theorem~\ref{t:posstoclassical} does not hold.

\begin{theorem}\label{t:existspossnotlearnable}
	There exists an FO learning framework
	$ \Fmf $ such that 
	$ \Fmf $ 
	is 
	in \PTimeL
	but $ \Fmf_\pi = (\examples_\pi,\hypothesisSpace_\pi) $ is not in \PTimeL.
\end{theorem}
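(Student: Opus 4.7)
The plan is to exhibit a concrete FO learning framework $\Fmf$ that witnesses the separation: classical learning of $\Fmf$ is cheap, yet the valuation layer of $\Fmf_\pi$ defeats every polynomial-time strategy. The construction has to be chosen so that the classical learner leans heavily on equivalence queries (because these are the queries whose informativeness is most degraded by the addition of valuations), while the possibilistic extension contains a superpolynomially rich family of targets that are pairwise indistinguishable by polynomially many queries.

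First, I would fix an FO framework $\Fmf = (\examples, \hypothesisSpace)$ in \PTimeL where classical equivalence queries reveal one new target formula per call (for example, a framework whose KBs are finite sets of atomic facts, or some similar structurally simple family), together with a polynomial entailment test. The upshot is that $\Fmf$ is trivially in \PTimeL via the standard ``empty hypothesis then add each returned counterexample'' strategy, which I would verify explicitly.

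Second, I would design an adversary for $\Fmf_\pi$ and a family $\mathcal{F} \subseteq \hypothesisSpace_\pi$ of candidate targets sharing the same classical projection but differing on valuations. The adversary maintains a version space $\mathcal{V} \subseteq \mathcal{F}$ of targets consistent with all answers so far. For each membership query $(\phi,\alpha)$ it returns the bit agreed on by the majority of $\mathcal{V}$, and for each equivalence query with hypothesis $h$ it returns a Case-B-style counterexample $(\phi,\gamma)$ where $\phi$ already appears in $h^\ast$ and $\gamma$ is a valuation strictly above $h$'s value for $\phi$ but below the corresponding valuation in many elements of $\mathcal{V}$. A counting argument should then show that $|\mathcal{V}|$ shrinks by at most a constant factor per query, so after polynomially many queries $\mathcal{V}$ still contains at least two distinct targets, contradicting exact identification within the time bound $p(|t|, |e|)$.

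The main obstacle is calibrating $\mathcal{F}$ and the adversary so that three things hold simultaneously: (i) every answer the adversary produces is a valid counterexample/membership bit for \emph{every} target in the shrunken $\mathcal{V}$ (so the interaction is coherent with some actual target $t \in \mathcal{F}$), (ii) the counterexamples the adversary returns stay of size polynomial in $|t|$ so they cannot enlarge the learner's time budget beyond what the argument allows, and (iii) the classical framework $\Fmf$ really does sit in \PTimeL despite having a hypothesis space rich enough to embed $\mathcal{F}$. The delicate part is (iii): I expect to need an $\mathcal{F}$ whose classical projections lie in a polynomially-learnable fragment while the valuation parameters range over a space with superpolynomially many ``hard-to-separate'' points, and checking that binary search with membership queries does not defeat the adversary requires exploiting the fact that the learner does not know a priori which formulas even belong to $t^\ast$, so valuation-probing queries on irrelevant formulas waste time without advancing the version-space argument.
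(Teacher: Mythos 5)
Your central premise---that the valuation layer by itself can defeat every polynomial-time learner over a classically simple hypothesis space---is refuted by the paper's own positive results. The witness frameworks you name (finite sets of atomic facts, or ``similar structurally simple'' families) are \safe, and Theorem~\ref{thm:mqeq} (via Lemma~\ref{l:findh}) shows that for every \safe FO learning framework in \PTimeL, the possibilistic extension is also in \PTimeL. Concretely, your adversary cannot survive: each equivalence-query counterexample $(\phi,\gamma)$ hands the learner a formula $\phi$ entailed by the target, and \algfindval pins down $\val(\phi,t)$ to precision $p$ with $O(\log(10^p))=O(p)$ membership queries, so there is no space of ``hard-to-separate'' valuation points. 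Since $|t^v|\leq|t|$ and $\p(t)\leq|t|$, the learner needs to discover only polynomially many valuation levels, each at polynomial cost, and the paper's algorithm detects an underestimated precision (your ``Case-B-style'' counterexamples, where $\phi$ is already entailed by the hypothesis at the returned level) and simply increments $p$, which can happen at most $\p(t)\leq|t|$ times. Your point~(iii) gestures at the learner not knowing which formulas belong to $t^*$, but that is exactly the information equivalence queries supply, so the version space collapses within the time bound.

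The paper's actual construction draws the hardness from an entirely different place, and it is instructive that its witness is deliberately \emph{not} \safe. Start from a framework $\Fmf$ that is \emph{not} in \PTimeL (e.g.\ the \EL framework) and force every KB to contain a fixed contradictory pair $\{\phi,\neg\phi\}$: classically every hypothesis is equivalent to $\{\bot\}$, so the modified framework $\Fmf^\bot$ is learnable in constant time. In the possibilistic extension, however, the contradiction only manifests up to the inconsistency degree; the strict cut $l^*_{\overline{\inc(l)}}$ recovers an arbitrary member of the original hard hypothesis space, so a polynomial-time learner for $\Fmf^\bot_\pi$ would yield one for $\Fmf$ (via Theorem~\ref{t:posstoclassical}), a contradiction. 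If you want to salvage your plan, you must either abandon safety in a way that smuggles in a classically hard problem (as the paper does) or find a different source of hardness than the granularity of valuations, which membership-query binary search neutralizes.
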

\begin{proof}
	Let $ \Fmf=(\examples,\hypothesisSpace) $ be an FO learning framework that is \emph{not} in \PTimeL. Such \Fmf exists, 
	one can consider, for instance, 
			the \EL learning framework~\cite[Theorem~68]{KLOW18}\footnote{Non-polynomial query learnability is 
	proved in~\cite[Theorem~68]{KLOW18}, which 	implies non-polynomial time learnability.}.
	We use $ \Fmf $ to define the learning framework 
	$ \Fmf^\bot = (\examples,\hypothesisSpace^\bot) $ where
	$ \hypothesisSpace^\bot = \{ h\cup\{ \phi,\neg\phi  \} \mid h\in \hypothesisSpace \}$ for a fixed but arbitrary non-trivial FO formula $ \phi $. 
			Even though \Fmf is not learnable
		in polynomial time,
	$  \Fmf^\bot$ is.
		The learner can learn any $ l\in \hypothesisSpace^\bot $ 
				by   returning the hypothesis $ \{ \bot \} $ (in constant time).
			Assume that $ \Fmf^\bot_\pi =(\examples_\pi, \hypothesisSpace_\pi^\bot)$ 
	is in \PTimeL.
	This means that for every target $ l\in \hypothesisSpace_\pi^\bot $
	we can learn 
		in polynomial time 
	a hypothesis $h$ such that $ h\equiv l $.
	By construction, for every $ t\in\hypothesisSpace $ there is  
	$ l\in\hypothesisSpace_\pi^\bot $ such that $ t\equiv l^{*}_{\overline{\inc(l)}} $.
	By learning $h$ such that 
	$ h\equiv l$
	we have also learned a hypothesis 
	$ h $
	such that $ h^*_{\overline{\inc(h)}} \equiv t $.
			By Theorem~\ref{t:posstoclassical}, 	$ \Fmf \in \PTimeL$, which contradicts our assumption that this is not the case.
	Therefore we have found an FO  
	learning framework $ \Fmf^\bot $
	that  is
		is in \PTimeL but 
	its 
	possibilistic extension $ \Fmf_\pi^\bot $ is not
		 in
		\PTimeL.
	\end{proof}

The FO learning framework $\Fmf^\bot$ in
the proof of Theorem~\ref{t:existspossnotlearnable} 
is not \safe (see definition in Subsection~\ref{subsec:learnability})
because, for $l\not\subseteq \{\phi,\neg \phi\}$ we have $l\in\hypothesisSpace^\bot$ 
with $(l\setminus\{\phi,\neg \phi\})\not\in\hypothesisSpace^\bot$.
Intuitively, non-\safe learning frameworks allow cases in which 
the target   is easy to learn if we aim at learning the \emph{whole} target,  
not a \emph{subset} of it. 
In the following,
we focus on FO learning frameworks that are
\safe\footnote{All learning from entailment results we found in the literature
could be formulated in terms of  safe learning frameworks.}. 																																								The first transferability result we present is
for the case in which the learner has access to only membership queries.
Before showing the reduction,
we define the procedure $ \algfindval$ that takes as input a precision $ p $
and a formula $ \phi $
and returns the highest valuation $\beta$ with precision $p$
of a 
formula $ \phi $ entailed by the target $ t $ (or zero if it is not entailed). 
That is, $\beta$ is such that $\beta=_p \val(\phi,t)$.
For any $ \gamma\in [0,1]_p $ the procedure can check if $ t \models (\phi,\gamma) $ 
by calling the oracle $ \MQc{\Fmf_\pi}{t} $ with 
$ (\phi,\gamma) $ as input.
To compute $\beta$  such that $\beta=_p \val(\phi,t)$, \algfindval
performs a binary search on $[0,1]_p$. Lemma~\ref{l:val} states the correctness and the complexity of  \algfindval.

\begin{lemma}\label{l:val} 
	Let $\Fmf_\pi=(\examples_\pi,\hypothesisSpace_\pi)$ be a possibilistic learning framework and 
	let	$ t\in\hypothesisSpace_\pi$ be the target.
	 \algfindval, 
		with input a precision $ p\in \mathbb{N}^+ $ and $ \phi \in\examples_\pi$, runs  in  polynomial time
	in $ p $ and $ |\phi| $
	and outputs $ \beta $ such that $ \beta =_p \val(\phi,t) $.
\end{lemma}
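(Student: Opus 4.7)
The plan is to implement $\algfindval$ as a binary search over the finite, linearly ordered set $[0,1]_p$ of values in $[0,1]$ with precision $p$. The first step will be to observe that Point~2 of Lemma~\ref{p:propositions} gives the equivalence $t \models (\phi, \gamma)$ iff $\gamma \leq \val(\phi, t)$, so the set $S_\phi := \{\gamma \in [0,1]_p \mid t \models (\phi,\gamma)\}$ is downward closed in $[0,1]_p$. Therefore $\algfindval$ only needs to locate its maximum element, outputting $0$ if $S_\phi = \emptyset$, which is exactly the kind of problem binary search solves.

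Next I would specify the algorithm: maintain an interval $[lo, hi] \subseteq [0,1]_p$ with the invariant that $\max S_\phi$, if it exists, lies in $[lo, hi]$; at each step pick a midpoint $\gamma \in [lo,hi]_p$, call $\MQc{\Fmf_\pi}{t}((\phi,\gamma))$, and update $lo$ or $hi$ according to the yes/no answer. The loop terminates when $lo = hi$, at which point the procedure returns $lo$ if this value has been confirmed positive, and $0$ otherwise.

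The complexity analysis then reduces to two observations: $|[0,1]_p| \leq 10^p + 1$, so the number of oracle calls is $O(p \log_2 10) = O(p)$; and each call involves a pair $(\phi,\gamma)$ of size $O(|\phi| + p)$, with arithmetic on rationals of precision $p$ being polynomial in $p$. Combining these yields total running time polynomial in $|\phi|$ and $p$, as required.

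For correctness, I would let $\beta$ denote the returned value and $v := \val(\phi, t)$. If $v = 0$ every query returns \emph{no} and $\beta = 0 =_p v$. Otherwise, by construction $\beta$ is the greatest element of $[0,1]_p$ not exceeding $v$, i.e.\ $v$ truncated to precision $p$, so $\beta =_p v$ by definition of $=_p$. I do not expect a deep obstacle here; the main thing to be careful about will be phrasing the binary search invariant to correctly handle the edge case $S_\phi = \emptyset$, and noting that the monotonicity used applies to \emph{arbitrary} $\gamma \in [0,1]_p$ (not only to valuations occurring in $t$), which is indeed what Point~2 of Lemma~\ref{p:propositions} delivers.
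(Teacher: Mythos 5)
Your proposal is correct and follows essentially the same route as the paper's (sketched) proof: a binary search over $[0,1]_p$, justified by the downward-closedness of $\{\gamma \mid t \models (\phi,\gamma)\}$ from Point~2 of Lemma~\ref{p:propositions}, with $\log_2(10^p+1)$ iterations each of cost polynomial in $|\phi|$ and $p$. You merely spell out the invariant and the $\val(\phi,t)=0$ edge case that the paper leaves implicit.
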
	
\begin{proof}[Sketch]
		By Point~2 of Lemma~\ref{p:propositions}, 
	 \algfindval 
	can determine $\beta$ such that $ \beta =_p \val(\phi,t) $
	by performing a binary search on 
	the interval of numbers $[0,1]_p$.
					So the number of iterations is bounded by $ log_2 (10^{p} +1) $, which is  polynomial in $ p $.
	Each iteration can be performed in polynomial time 
	in  $ |\phi| $ and $p$.
\end{proof}

By Thm.~\ref{thm:memb},
for \safe FO learning frameworks,
polynomial time results with only membership queries can be transferred 
to their possibilistic extensions if 
the precision of the target is known (by Thm.~\ref{thm:memonly}, we cannot remove this assumption).
\begin{theorem}\label{thm:memb}
	Let \Fmf be a \safe FO learning framework.
	For all $ p\in \mathbb{N^+} $, when only membership queries can be asked,
	$ \Fmf $ is in \PTimeL iff
	$ \Fmf_\pi^p $ is in \PTimeL. 
\end{theorem}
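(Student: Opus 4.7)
The plan is to prove both directions by adapting the reductions from the proofs of Theorems~\ref{thm:memonlyprec} and~\ref{t:posstoclassical} to the membership-only, bounded-precision setting. The two workhorses throughout will be Point~1 of Lemma~\ref{p:propositions}, which lets me convert between $\MQc{\Fmf^p_\pi}{t}((\phi,\alpha))$ and $\MQc{\Fmf}{t^*_\alpha}(\phi)$, and safety of $\Fmf$, which guarantees that $t^*_\alpha \in \hypothesisSpace$ for every $\alpha$ and (since $\hypothesisSpace$ is non-empty) that $\emptyset \in \hypothesisSpace$.

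For the direction ``$\Fmf \in \PTimeL \Rightarrow \Fmf^p_\pi \in \PTimeL$'', I will follow the template of Theorem~\ref{thm:memonlyprec}. Given a possibilistic target $t \in \hypothesisSpace^p_\pi$ and a polynomial-time learner $A$ for $\Fmf$, I fire one instance $A_\alpha$ of $A$ for each $\alpha$ in the set $I$ of multiples of $10^{-p}$ lying in $(0,1]$, so that $|I| = 10^p$ and $t^v \subseteq I$ (every valuation in $t$ has precision at most $p$ and is thus a multiple of $10^{-p}$). Each $A_\alpha$ targets $t^*_\alpha$, which belongs to $\hypothesisSpace$ by safety, and its queries on $\phi$ are answered by forwarding $(\phi,\alpha)$ to $\MQc{\Fmf^p_\pi}{t}$. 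Once each $A_\alpha$ returns $k^*_\alpha \equiv t^*_\alpha$, I use Lemma~\ref{l:fundamentalidea} to assemble $\{(\phi,\alpha) \mid \phi \in k^*_\alpha, \alpha \in I\} \equiv t$, collapsing duplicates of the same formula to their maximum valuation so that the result stays within $\hypothesisSpace^p_\pi$. Since $p$ is fixed, $|I|$ is a constant and each $A_\alpha$ is polynomial in $|t^*_\alpha| \le |t|$, so the entire simulation is polynomial in $|t|$.

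For the converse, I adapt the template of Theorem~\ref{t:posstoclassical}: given an FO target $k \in \hypothesisSpace$, I embed it into the virtual possibilistic target $t = \{(\phi,\alpha_0) \mid \phi \in k\}$ for some fixed $\alpha_0 \in (0,1]$ of precision $p$, so $\p(t) = p$ and $t \in \hypothesisSpace^p_\pi$. I then run the polynomial-time possibilistic learner $A_\pi$ on $t$ and answer each query $(\phi,\beta)$ by cases: if $\beta \le \alpha_0$, then $t^*_\beta = k$ and I forward the call to $\MQc{\Fmf}{k}(\phi)$ (Point~1 of Lemma~\ref{p:propositions}); if $\beta > \alpha_0$, then $t^*_\beta = \emptyset$ and the answer amounts to deciding whether $\phi$ is a tautology. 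When $A_\pi$ halts with $h \equiv t$, I return the FO projection $h^*_{\alpha_0}$, which is equivalent to $t^*_{\alpha_0} = k$ by Point~1 of Lemma~\ref{p:propositions}. The \PTime entailment requirement on $\Fmf$ needed for $\Fmf \in \PTimeL$ follows from the corresponding property of $\Fmf^p_\pi$ via the same lifting trick.

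The step I expect to be trickiest is the tautology check in the $\beta > \alpha_0$ branch of the converse. It is equivalent to deciding $\emptyset \models (\phi,1)$ in $\Fmf_\pi$, which is polynomial by the entailment hypothesis on $\Fmf^p_\pi$ (safety ensures $\emptyset \in \hypothesisSpace$, hence $\emptyset \in \hypothesisSpace_\pi$). Apart from this, Lemma~\ref{l:fundamentalidea} and Point~1 of Lemma~\ref{p:propositions} do the heavy lifting on both sides.
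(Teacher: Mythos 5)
Your proposal is correct in substance but takes a genuinely different route from the paper in the forward direction. Where you brute-force one instance $A_\alpha$ of the classical learner for \emph{every} multiple of $10^{-p}$ in $(0,1]$ (so $10^p$ instances, in the style of Theorem~\ref{thm:memonlyprec}), the paper discovers the relevant valuation levels \emph{adaptively}: starting from $\gamma=0$ it learns $k^*_{\overline{\gamma}}\equiv t^*_{\overline{\gamma}}$ (Claim~\ref{l:mq:findlevel}, which answers a query $\phi$ by forwarding $(\phi,\gamma+10^{-p})$), then runs \algfindval (binary search over the valuations of precision $p$, Lemma~\ref{l:val}) on the formulas of the learned cut to locate the next level $\beta\in t^v\cup\{1\}$ via Point~3 of Lemma~\ref{p:propositions}, and iterates; this caps the number of instances at $|t^v|\le|t|$ and yields a running time polynomial in $|t|$ \emph{and} $p$ jointly. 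Your argument instead leans on $p$ being a fixed parameter of the framework $\Fmf^p_\pi$, so that $10^p$ is a constant absorbed into the polynomial bound in $|t|$ and the largest counterexample. By the letter of the paper's definition of \PTimeL this is admissible and the theorem as stated follows, but it is the degenerate reading: the exponential dependence on $p$ is precisely what the paper's machinery is built to avoid, and the polynomial-in-$p$ version of this construction (reused in Lemma~\ref{l:findh}) is what Theorem~\ref{thm:mqeq} actually needs when $p$ is incremented up to $\p(t)$, so your variant would not transfer there. On the converse direction you essentially reproduce the paper's reduction (embed $k$ as a constant-valuation possibilistic KB and project back); your choice of a valuation $\alpha_0$ with $\p(\alpha_0)=p$ is in fact more careful than the paper's use of valuation $1$, since it places the virtual target squarely in $\hypothesisSpace^p_\pi$, at the price of the $\beta>\alpha_0$ branch, where your appeal to polynomial-time entailment from the empty KB (which is in $\hypothesisSpace$ by safety) is an acceptable, if slightly informal, way to handle the tautology test.
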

\begin{proof}
	To show  the transferability of  \PTimeL 
	membership from $ \Fmf$ to $ \Fmf_\pi $,
	we use the following claim.

	\begin{claim}\label{l:mq:findlevel}
		Assume 
		$ \Fmf  = (\examples,\hypothesisSpace)$ is \safe and in \PTimeL
		with only membership queries.
		For every $ p\in \mathbb{N^+} $ and framework
						$ \Fmf_\pi^p = (\examples_\pi,\hypothesisSpace_\pi^p) $ with
		$ t\in \hypothesisSpace_\pi^p$,
				given a valuation $ \alpha $ with $\p(\alpha)=p$, one can learn 
						$ k^*_{\overline{\alpha}} $ such that 
		$ k^*_{\overline{\alpha}} \equiv t^*_{\overline{\alpha}} $
		in  time polynomial w.r.t. $ |t| $
		with only membership queries.
	\end{claim}
	\begin{proof}
		We start the execution of a polynomial time learner $ A $ 
		for $ \Fmf $. 		Whenever $ A $ calls  \MQc{\Fmf}{t^*_{\overline{\alpha}}} with $ \phi $
		as input, we call \MQc{\Fmf_\pi}{t} with $ (\phi,\alpha + 10^{-p}) $ as input 
		and we return the same answer to $ A $.
		By Point~1 of Lemma~\ref{p:propositions},
						$ \MQc{\Fmf}{t^*_{\overline{\alpha}}}(\phi) =
		\MQc{\Fmf_\pi}{t}(\phi,\alpha + 10^{-p}) $.
								Since \Fmf is \safe, 
		 $ A $ will build a  hypothesis $ k^*_{\overline{\alpha}} $
		such that 
		$ k^*_{\overline{\alpha}} \equiv t^*_{\overline{\alpha}} $
								in polynomial time w.r.t. $ |t| $.
			\end{proof}
	
	We set $ \gamma := 0 $ and $ S := \emptyset $.
	By Claim~\ref{l:mq:findlevel} we can find in polynomial time w.r.t. $ |t| $ a 
	 hypothesis 
	$ k^*_{\overline{\gamma}} $ such that 
	$ k^*_{\overline{\gamma}}  \equiv t^*_{\overline{\gamma}}  $.
	For every $ \phi \in k^*_{\overline{\gamma}} $,
	we run  \algfindval with 
	$p= \p(t) $ and $ \phi $ as input to find $ \val(\phi,t) $.
	In this way, by Point~3 of Lemma~\ref{p:propositions} 	and Lemma~\ref{l:val}, we identify  
	in polynomial time w.r.t. $ |t| $
	some  $ \beta\in t^v\cup\{1\} $ such that $ k^*_{\overline{\gamma}} \equiv t^*_\beta $. 	We set $k^*_\beta:= k^*_{\overline{\gamma}} $ 	 and  add $ k^*_\beta $ to $ S $.
	Then, we update $ \gamma $ to the value $ \beta $ and 
	 apply  Claim~\ref{l:mq:findlevel} again.
	For every $ \phi \in k^*_{\overline{\gamma}} $, we run  \algfindval again with 
	$p= \p(t) $ and $ \phi $ as input to find $ \val(\phi,t) $.
	We repeat this process until we find  
	$ k^*_{\overline{\gamma}}\equiv \emptyset  $ or $ \gamma + 10^{-p} > 1 $. 	Each time we run  \algfindval, we identify a higher 
	valuation in $ t^v $. Therefore, this happens at most $ |t^v| $ times.	 
					For all $ \alpha\in t^v $,
	there is  $ k^*_{\alpha}\in S$ that satisfies
	$ k^*_{\alpha} \equiv t^*_{\alpha} $,
	therefore, by Lemma~\ref{l:fundamentalidea},
		\[ h = \bigcup_{k^*_{\alpha}\in S} \{ (\phi,\alpha) \mid \phi\in k^*_{\alpha} \}  \]
	is such that $ h\equiv t $.

	We now show the transferability of  \PTimeL 
	membership from $ \Fmf_\pi $ to $ \Fmf $.
	Let $ k\in\hypothesisSpace $ be the target.
	We start the execution of a learner $ A_\pi $  for $ \Fmf_\pi $
	that attempts to learn a hypothesis equivalent to
	$ t = \{ (\phi,1 ) \mid \phi\in k \} $.
	By Claim~\ref{l:target2poss} of Theorem~\ref{t:posstoclassical},
	we can simulate a
	call to \MQc{\Fmf_\pi}{t} with input $ (\phi,1) $ 
	 by calling \MQc{\Fmf}{k} with $ \phi $
	as input and returning the same answer to $ A_\pi $.
	$ A_\pi $ terminates in polynomial time w.r.t. $ |t| $
	with a hypothesis $ h $ such that $ h\equiv t $.
	As $ h^* \equiv t^* = k $, $h^*$ is as required.
\end{proof}

When we 
want to transfer learnability results
from $ \Fmf$ to $ \Fmf_\pi$ it is important to learn one $h_\alpha$ such that $h_\alpha\equiv t_\alpha$
for each $\alpha\in t^v$, where $t$ is the target (Example~\ref{ex:hypothesis}).

\begin{example}\label{ex:hypothesis}
Let $t=\{ (p\rightarrow q_1,0.3), (p\rightarrow q_2,0.7) \}$.
We can use the polynomial time algorithm for propositional Horn~\cite{DBLP:conf/icml/FrazierP93}   to
learn
a hypothesis
$ k^* = \{ p\rightarrow (q_1 \land q_2) \} \equiv  t^* $.
However, if   $ h= \{ (\phi,\val(\phi,t)) \mid \phi\in k^* \} $
then   $ h = \{ (p\rightarrow (q_1 \land q_2),0.3) \} \not\equiv t $.
\end{example}

	A learner that has access to both
	membership and equivalence query oracle
	has a way of finding the precision of the target
	when it is unknown.
	With membership queries, we can use  \algfindval 
	to find the valuation of formulas up to a given precision.
		By Lemma~\ref{l:findh},		 we 
	 can
	 obtain useful information about the precision of the target
	with the
	 counterexamples obtained 
	after  an
	equivalence query.

	\begin{lemma}\label{l:findh}		Assume $ \Fmf_\pi = (\examples_\pi,\hypothesisSpace_\pi) $ is the possibilistic 
		extension of a
		 \safe FO learning framework
		and   
		$ t\in\hypothesisSpace_\pi $ is the target.
		Given $ p \in \mathbb{N^+} $,
		one can determine
		 that $ p < \p(t) $ or
		 compute  $ h \in\hypothesisSpace_\pi$
		such that
		$ h \equiv t $, in   polynomial time w.r.t. $ |t| $, $ p $, and 
		the largest counterexample
		seen so far.
	\end{lemma}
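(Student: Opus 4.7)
The plan is to build a candidate hypothesis $h$ by running the MQ-only construction from the proof of Theorem~\ref{thm:memb} under the tentative assumption $\p(t)=p$, and then use a single equivalence query to either confirm $h\equiv t$ or certify that the assumption was wrong. Concretely, I would iteratively learn cuts $k^*_{\overline{\gamma}}$ using the \safe learner $A$ for $\Fmf$ via the simulation $\MQc{\Fmf}{t^*_{\overline{\gamma}}}(\phi) = \MQc{\Fmf_\pi}{t}(\phi,\gamma+10^{-p})$ from Claim~\ref{l:mq:findlevel}, assign valuations $\beta_\phi =_p \val(\phi,t)$ to each $\phi \in k^*_{\overline{\gamma}}$ via \algfindval (Lemma~\ref{l:val}), collect the pairs $(\phi,\beta_\phi)$ into $h$, and update $\gamma$ to the maximum $\beta_\phi$ produced in the iteration.

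After the loop, I would pose $\EQc{\Fmf_\pi}{t}(h)$: if the oracle answers yes, I return $h$; otherwise I output ``$p<\p(t)$''. Correctness is immediate from Theorem~\ref{thm:memb}: when $\p(t)\leq p$ the simulation correctly identifies every cut and the construction yields $h\equiv t$ in polynomial time in $|t|$ and $p$, so the equivalence query must answer yes; contrapositively, any counterexample returned forces $\p(t)>p$.

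The main obstacle is bounding the runtime when $\p(t)>p$. In that case the iteration variable $\gamma$ is only guaranteed to strictly increase by $10^{-p}$ per round, giving an a priori bound of $10^p$ iterations rather than the bound $|t^v|+1$ that holds when $\p(t)\leq p$. I would resolve this by capping the loop by a polynomial in the size of the first learned hypothesis $k^*_{\overline{0}}$, which in the correct case polynomially upper bounds $|t^v|$; if the cap is exceeded, the algorithm stops and outputs ``$p<\p(t)$''. An alternative that avoids the cap is to interleave equivalence queries with the construction: by Remark~\ref{remark:membership} we may assume counterexamples are positive, and for any such $(\phi,\alpha)$ the condition $\alpha>\algfindval(p,\phi)$ already forces $\val(\phi,t)$ to lie strictly between the returned approximation and the next grid point of precision $p$, so by Point~3 of Lemma~\ref{p:propositions} we have $\val(\phi,t)\in t^v$ with precision strictly greater than $p$, immediately yielding ``$p<\p(t)$''.
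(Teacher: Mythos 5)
There is a genuine gap, and it is at the very first step. Your construction is built on the machinery of Theorem~\ref{thm:memb} and Claim~\ref{l:mq:findlevel}, which presuppose that the \emph{classical} framework $\Fmf$ is polynomial-time learnable \emph{with only membership queries}. That hypothesis is not available in Lemma~\ref{l:findh}: the lemma is invoked in the proof of Theorem~\ref{thm:mqeq}, where $\Fmf$ is merely assumed to be in \PTimeL with \emph{both} query types, so the polynomial-time learner $A$ for $\Fmf$ may itself need to ask equivalence queries about each cut $t^*_\beta$. Your plan of ``run the MQ-only construction, then verify with one final possibilistic equivalence query'' therefore cannot even get started in general, because there may be no MQ-only way to learn the cuts $k^*_{\overline{\gamma}}\equiv t^*_{\overline{\gamma}}$. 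The entire difficulty the paper's proof addresses --- and which your proposal sidesteps by assumption --- is how to answer the classical learner's equivalence queries for the various cuts using the single possibilistic oracle $\EQc{\Fmf_\pi}{t}$. The paper does this by running one instance $A_\beta$ per discovered valuation level, letting them all block on their pending equivalence queries, assembling their current hypotheses (plus a tautology pinned at precision $p$) into one possibilistic hypothesis $h$, posing a single call to $\EQc{\Fmf_\pi}{t}$, and then dispatching the returned counterexample: \algfindval locates the level $\beta$, a new instance is spawned if needed, and Claim~\ref{l:counterexample} is what converts the case $k^{\beta,m}\models\phi$ into the verdict $p<\p(t)$. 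None of this batching/dispatching structure appears in your proposal, and without it the reduction does not go through.

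A secondary, smaller issue: your fallback of capping the number of iterations by a polynomial in $|k^*_{\overline{0}}|$ is not justified. The learner's output $k^*_{\overline{0}}$ is only guaranteed to be equivalent to $t^*$; its size need not polynomially lower-bound $|t^v|$, so exceeding the cap does not soundly certify $p<\p(t)$. Your alternative idea of interleaving equivalence queries and using Point~3 of Lemma~\ref{p:propositions} to detect $\val(\phi,t)\in t^v$ with precision exceeding $p$ is essentially the content of the paper's Claim~\ref{l:counterexample}, so that part of your instinct is right --- but it needs to be embedded in the multi-instance simulation described above rather than bolted onto an MQ-only pipeline that is not available here.
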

	\begin{proof}	
In our proof, we use the following claims. 		
\begin{claim}\label{clm:aux}
Given $ h \in\hypothesisSpace_\pi$ such that $t\models h$, one can 
construct in polynomial time  in $|h|$ some $h'\in\hypothesisSpace_\pi$
such that $t\models h'\models h$ and, for all $(\phi,\alpha)\in h'$, 
$\ t\models (\phi,\alpha)$ and $\alpha=_{\p(h')}\val(\phi,t)$. 
\end{claim}
\begin{proof}
Let $h'$ be the set of all $(\phi,\beta)$ such that 
$(\phi,\alpha)\in h$ and  \algfindval returns $\beta$ with 
$\phi$ and $\p(h)$ as input. As $t\models h$, by construction of $h'$, 
$\ t\models h'\models h$.
By Lemma~\ref{l:val}, 
 $h'$ can be constructed in polynomial time in $|h|$ and  is as required.  
\end{proof}
		
\begin{claim}\label{l:counterexample} 	Let 			$ h \in\hypothesisSpace_\pi$ be 		such that, for all $(\phi,\alpha)\in h$, $ t \models (\phi,\alpha)$ and 
	$\alpha=_{\p(h)}\val(\phi,t)$. 
			If $\EQc{\Fmf_\pi}{\target}$  	with input $ h $ returns
			$ (\phi,\alpha) $ 
	then either we know that  $  \p(t) > \p(h) $ 
	or  
	$ h^*_\beta \not\models \phi $ where $\beta=_{\p(h)}\val(\phi,t)$.
\end{claim}
\begin{proof}
									 	By Point~1 of Lemma~\ref{p:propositions}, $ h^*_\beta \models \phi $ iff $ h\models (\phi,\beta) $.
	If $ h\models (\phi,\beta) $ or $ \beta =0 $ (note: $\beta$ can be $0$ because, e.g., $0.01=_1 0$), then $ \p(\val(\phi,t)) > \p(h) $.
	By Point~3 of Lemma~\ref{p:propositions},
		$ \val(\phi,t)\in t^v\cup\{1\} $, 
	so  
	$ \p(t) > \p(h) $.
\end{proof}

By Remark~\ref{remark:membership}, 
we can assume at all times in this proof that any hypothesis constructed 
is entailed by the target (possibilistic or not). Moreover, by Claim~\ref{clm:aux}, we can  assume that, for any target and hypothesis
 $t,h\in\hypothesisSpace_\pi $, we have that, 
for all $(\phi,\alpha)\in h$, $ t \models (\phi,\alpha)$ and 
	$\alpha=_{\p(h)}\val(\phi,t)$. 
	So we can assume at all times in our proof that the hypothesis $h$ we 
	construct (Equation~\ref{eq:hyp}) satisfies the conditions of Claim~\ref{l:counterexample}.

		Let $ A $  be a polynomial time learner\footnote{Assume w.l.o.g. that 		$A$ always eventually asks an equivalence query 
		until it finds an equivalent hypothesis (but may   execute 
		other steps and ask membership queries  between each equivalence query). 
						} for \Fmf. 		As in the proof of Theorem~\ref{thm:memb}, we run multiple instances of $A$. 
		We denote  by $\Rsf$ the set of instances of $A$.
		Each instance in $\Rsf$ is denoted $A_\beta$ and attempts to learn a hypothesis equivalent to 
		$t^*_\beta$,
				where $\beta$ is a valuation. 
		We sometimes write $A^n_\beta$ to indicate that the instance $A_\beta$ has asked 
		$n$ equivalence queries so far. 
		We denote by $ k^{\beta,n} $ the 		 hypothesis given as input by
		 $A^n_\beta$ when  it  asks its $n$-th equivalence query.
		 For $n=0$, we assume that $ k^{\beta,n}=\emptyset$.
		
		Initially, $\Rsf:=\{A^0_{10^{-p}}\}$.
		  Whenever $ A_\beta\in\Rsf $ asks a membership query with input 
		$ \phi\in\examples $, 
		by Point~1 of Lemma~\ref{p:propositions},
				we can simulate $ \MQc{\Fmf}{t^*_\beta} $ by calling $ \MQc{\Fmf_\pi}{t} $ with  $ (\phi,\beta) $ as input
		and  returning the same answer to $ A_\beta $. 						Let $h_0$ be $\{ (\phi_{\top},\alpha)\}$ 
where $\phi_{\top}$ is a tautology and $\alpha$ is a valuation with $\p(\alpha)=p$.	
		Whenever $ A^n_\beta \in\Rsf  $ asks its $n$-th equivalence query, 
		we leave  $A^n_\beta $  waiting in the query state (see 
		description of a learning system in Subsection~\ref{subsec:learnability}).
		When all $A^m_{\alpha}\in\Rsf$ are waiting in the query state,
		we create  
		\begin{equation} \label{eq:hyp}
		h := \bigcup_{A^m_{\alpha}\in\Rsf} \{ (\phi,\alpha) \mid \phi\in k^{\alpha,m}\}\cup h_0
				\end{equation}
		and  call $ \EQc{\Fmf_\pi}{t} $ with  $ h $ as input (note: each instance $A_{\alpha}\in\Rsf$ may have asked a different number of
		 equivalence queries when $A^n_\beta$ asks its $n$-th equivalence query).
														If the answer is `yes', we have computed $h$ such that $h\equiv t$ and we are done.
						Upon receiving a (positive) counterexample $ (\phi,\gamma) $,
		we  run  \algfindval with $\phi$ and $\p(h)$ as input and compute a valuation $\beta$ such that
		$ \beta =_{\p(h)} \val(\phi,t) $ (Lemma~\ref{l:val}). 
		If $A_\beta \not\in \Rsf$,
		  we start the execution of the instance 
		  $A_\beta$ of algorithm $A$ and add $A_\beta$ to  $\Rsf$.
		Otherwise, $A_\beta \in \Rsf$ and
		we check whether $k^{\beta,m}\models \phi$ (assume $m$ is the number of equivalence queries posed so far by $A_\beta$).
						If $k^{\beta,m}\models \phi$ then,
								by  Claim~\ref{l:counterexample}, 
 we know
		that 
		$\p(h) < \p(t) $ then we are done.
		If $k^{\beta,m}\not\models \phi$ then  $\phi$ is a (positive) counterexample for $k^{\beta,m}$ and $t^*_\beta$.
		We return  $ \phi $
		to every 
		$ A^m_{\alpha}\in\Rsf $ such that $ \alpha \leq \beta $ and
		$ k^{\alpha,m}\not\models \phi $ and these instances resume their executions.
		Observe that,		since $h_0\subseteq h$, by the construction  
		of $h$, at all times 		$ \p(h) = p $.

		We now argue that this procedure terminates in polynomial time  
		w.r.t. $ |t| $, $ p $, and 
		the largest counterexample
		seen so far. Since there is only one instance $A_\beta$ in $\Rsf$ 		for each valuation  $\beta$ such that
		$ \beta =_p \val(\phi,t) $, by Point~3 of Lemma~\ref{p:propositions}, we have that
		at all times $|\Rsf|$ is linear in $|t^v|$, which is bounded by $|t|$.
		By Lemma~\ref{l:val}, 
		whenever we run  \algfindval to compute a valuation with $\phi$ and $p$ as input, 
		only polynomially many steps in $|\phi|$ and $p$ are needed. 
		Since \Fmf is \safe and $A$ is a polynomial time learner for \Fmf either we can determine that $p<\p(t)$ or  each  $A_\beta\in\Rsf$
		terminates, in polynomial time in the size of $t^*_\beta$ and the 
		largest counterexample seen so far, and outputs $k^{\beta,n}=h^*_\beta$  
		such that 
		$h^*_\beta\equiv t^*_\beta$. In this case, 		 by Lemma~\ref{l:fundamentalidea},
		  $h\equiv t$ and the process terminates. 	
												  																																																																																																																																													\end{proof}

			\begin{figure}[]		\centering
		\begin{tikzpicture}
		\tikzset{d/.style={circle,fill=black,inner sep=0pt,minimum size=4pt}}
		\tikzset{l/.style={fill=white,opacity=0,text opacity=1}}
		\tikzset{h/.style={ultra thin,-,dashed,opacity=0.3}}
		
		\def\x{1}
		\def\y{0}
		
		\node[l] () at (\x+6.5,\y-.2) {$ {\bf (a)}$};
		\node[l] () at (\x+6.5,\y-.8) {$ {\bf (b)}$};
		\node[l] () at (\x+6.5,\y-1.7) {$ {\bf (c)}$};
		
				\node[l] () at (\x+4,\y-.2) 
		{\small{$ \EQc{\Fmf_\pi}{t}(h) = (p \rightarrow q_1,0.1)$}};
		\node[l] () at (\x+4,\y-.8) 
		{\small{$ \EQc{\Fmf_\pi}{t}(h) = (p \rightarrow q_1,0.1)$}};
		\node[l] () at (\x+4.12,\y-1.5) 
		{\small{$ \EQc{\Fmf_\pi}{t}(h') = (p \rightarrow q_2,0.21)$}};
				\def\x{0}
		\def\y{0}
						\node[l] () at (\x+2.2,\y-1) {{\tiny$ \Leftarrow p \rightarrow q_1$}};
		\node[l] () at (\x+2.2,\y-1.2) {{\tiny$ \Leftarrow p \rightarrow q_1$}};
						
		\node[l] () at (\x,\y+0.2) {$ A_{0.1}$};
		\node[l] () at (\x+1,\y-0.4) {$ A_{0.3}$};
		
		\node[l] () at (\x+2,\y-1.7) {$ A_{0.7}$};
		
				\draw[-] (\x,\y) to  (\x,\y-0.2);
		
		\draw[-,dotted] (\x,\y-0.2) to  (\x,\y-0.8);
		\draw[-] (\x+1,\y-0.6) to  (\x+1,\y-.8);
		
		\draw[-,dotted] (\x,\y-0.8) to  (\x,\y-1.2);
		\draw[-,dotted] (\x+1,\y-0.8) to  (\x+1,\y-1);
		
		\draw[-] (\x+1,\y-1) to  (\x+1,\y-1.3);
		\draw[-] (\x,\y-1.2) to  (\x,\y-1.5);
		\draw[-,dotted] (\x+1,\y-1.3) to  (\x+1,\y-2);
		\draw[-,dotted] (\x,\y-1.5) to  (\x,\y-2);
		\draw[-] (\x+2,\y-1.9) to  (\x+2,\y-2);
		
		\def\z{2.9}
		\def\zz{1.5}
		
				\draw[h] (\x,\y-0.21) to  (\z,\y-0.21);
		\draw[h] (\x,\y-0.81) to  (\z,\y-0.81);
		\draw[h] (\x,\y-1) to  (\zz,\y-1);
		\draw[h] (\x,\y-1.2) to  (\zz,\y-1.2);
		\draw[h] (\x,\y-1.5) to  (\z,\y-1.5);

		\end{tikzpicture}
		\caption{
			Multiple instances of   algorithm
			$A$ in Example~\ref{e:mqeq}. 			Time flows  top-down.
						A dotted line means that the learner is waiting in query state,
			a continuous line means that the learner is running. 
											}
		\label{f:mqeq}
	\end{figure}
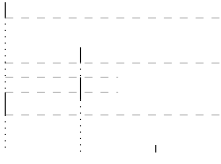
	
	The constructive proof of Lemma~\ref{l:findh}
	delineates the steps made in
	 Example~\ref{e:mqeq}
	 where the precision of the target is 1.

	\begin{example}\label{e:mqeq}\upshape
		Let $ \Fmf = (\examples,\hypothesisSpace) $ be 
		the \safe
		learning framework where $ \hypothesisSpace$ is the set of all propositional 
		Horn KBs and $ \examples $ is the set of all (propositional) Horn clauses. 						Let $ t\in\hypothesisSpace_\pi $ and $A$ be, respectively, the target and the learner of  Example~\ref{ex:hypothesis}. 
				Following our argument in Lemma~\ref{l:findh}, we start  an instance  $ A_{0.1} $ of $A$.
		When $ A_{0.1} $ is waiting in the query state,
		we build $ h = \{ (\phi_{\top},0.1) \} $ (Equation~\ref{eq:hyp}) and call $ \EQc{\Fmf_\pi}{t} $ 
		with  $h$ as input 
		(Point~$ (a) $ in Figure~\ref{f:mqeq}).
		Assume we receive the positive counterexample $ (p\rightarrow q_1,0.1) $.
		We run  \algfindval with   $ 1 $ and $ p\rightarrow q_1 $ as input, which computes
		 $ \val(p\rightarrow q_1,t) = 0.3 $.
		Since $ A_{0.3}\not\in\Rsf$, we start  $ A_{0.3} $.
				When all learners 
		are waiting in the query state, 		we call again $ \EQc{\Fmf_\pi}{t} $ 
		with  $ h $ as input
		(Point~$ (b) $ in Figure~\ref{f:mqeq}).
		At this point, $\Rsf=\{A_{0.1},A_{0.3}\}$.
		
		Assume we receive  $ (p\rightarrow q_1,0.1) $ again. 
						We have that $ \val(p\rightarrow q_1,t) = 0.3 $ and $ A_{0.3}\in\Rsf$.
		Since  $k^{0.3,1}\not\models p\rightarrow q_1$ and 
		$k^{0.1,1}\not\models p\rightarrow q_1$, 		we return $p\rightarrow q_1$ to both $ A^1_{0.1} $ and $ A^1_{0.3} $ and they resume their executions.
				All learners will eventually 
								be waiting in query state. When this happens
		we call 
		$ \EQc{\Fmf_\pi}{t} $ 
		with  $ h' = \{ (\phi_{\top},0.1), 
								(p \rightarrow q_1,0.1),
								(p \rightarrow q_1,0.3) \} $ as input.
		
		Assume the response is  $ (p\rightarrow q_2,0.21) $. We run  \algfindval
		with  $ 1 $ and $ p\rightarrow q_2 $ as input, which returns 		$ \val(p\rightarrow q_2,t)  = 0.7$.
		As before, we start $ A_{0.7} $ (Point~$ (c) $ in Figure~\ref{f:mqeq}) and add it to $\Rsf$.
		When all learners are 
		waiting again 		we call $ \EQc{\Fmf_\pi}{t} $ 
		with  $ h' $ as input.
		Assume we receive  $ (p\rightarrow q_2,0.1) $. We then 
						  send $ p\rightarrow q_2 $
		to every learner in $\Rsf$.
		Next time we call $ \EQc{\Fmf_\pi}{t} $,
		with $ h' \cup 
								\{ (p\rightarrow q_2,0.7),
								(p\rightarrow q_2,0.3),
								(p\rightarrow q_2,0.1) \} $ as input.
		The answer is `yes' and we are done. \hfill {\mbox{$\triangleleft$}}
	\end{example}

			

A direct consequence of Lemma~\ref{l:findh} is Theorem~\ref{thm:mqeq}.\	
	
\begin{theorem}\label{thm:mqeq}
	For every \safe  FO learning frameworks $\Fmf $
	we have,
			$ \Fmf $ is in \PTimeL
	iff 
	$ \Fmf_\pi  $ is in \PTimeL. 
\end{theorem}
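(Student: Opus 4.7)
The proof will proceed by showing the two directions of the biconditional separately. The forward direction, namely that $\Fmf_\pi \in \PTimeL$ implies $\Fmf \in \PTimeL$, is essentially already established as Theorem~\ref{t:posstoclassical} and requires no new argument, since a safe FO learning framework is in particular an FO learning framework. So the work lies in the converse direction: assuming a polynomial time learner exists for $\Fmf$ (using both query types), build one for $\Fmf_\pi$.

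The natural plan is to apply Lemma~\ref{l:findh} iteratively for $p = 1, 2, 3, \ldots$, since that lemma delivers, for each fixed precision $p$, a procedure that either returns a hypothesis $h \equiv t$ in polynomial time in $|t|$, $p$, and the largest counterexample seen, or determines with certainty that $\p(t) > p$. Concretely, I would run the procedure of Lemma~\ref{l:findh} with input $p = 1$. If it halts with $h \equiv t$, return $h$. Otherwise, it reports $p < \p(t)$, so I increment $p$ to $p+1$ and repeat, carrying over accumulated information if convenient (though this is not necessary for correctness). Since $\p(t)$ is finite, the loop terminates once $p$ reaches $\p(t)$, at which point Lemma~\ref{l:findh} must return a genuine hypothesis equivalent to $t$.

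For the complexity argument, the key observation is that $\p(t) \leq |t|$: the precision of any valuation appearing in $t$ is bounded by the number of digits needed to write $t$ down, hence by $|t|$. Thus the loop executes at most $|t|$ iterations. Each iteration is polynomial in $|t|$, $p \leq |t|$, and the largest counterexample so far, so the cumulative running time up to any step is bounded by a polynomial in $|t|$ and the largest counterexample seen throughout the entire procedure, which is exactly the notion of polynomial time learnability required.

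The main subtle point, and the only thing that might require care, is ensuring that the counterexamples received across different values of $p$ are still bounded by ``the largest counterexample seen so far'' as required by the definition of \PTimeL, and that the ``declaration'' step $p < \p(t)$ in Lemma~\ref{l:findh} indeed uses polynomial time up to that point (so that aborting and restarting with $p+1$ is legitimate). Both are immediate from the statement of Lemma~\ref{l:findh}, since its time bound is stated as polynomial in $|t|$, $p$, and the largest counterexample, and the outer loop only inflates this by a factor polynomial in $|t|$. Hence $\Fmf_\pi \in \PTimeL$, completing the reverse direction and the theorem.
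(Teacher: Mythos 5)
Your proposal is correct and follows essentially the same route as the paper's own proof: the forward direction is delegated to Theorem~\ref{t:posstoclassical}, and the converse iterates the procedure of Lemma~\ref{l:findh} with the estimated precision $p$ starting at $1$ and incremented upon a ``$p < \p(t)$'' verdict, terminating within $\p(t) \leq |t|$ rounds. Your extra remarks on cumulative running time and counterexample sizes across rounds are sound and merely make explicit what the paper leaves implicit.
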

\begin{proof}
	One direction holds by Theorem~\ref{t:posstoclassical}.
	We prove the other direction.
	Let $\Fmf $ be a \safe  FO learning framework in \PTimeL and let 
	$ \Fmf_\pi=(\examples_\pi,\hypothesisSpace_\pi) $ be its possibilistic extension. 
	Consider a learner that initially estimates 
		precision $p$ of the target $t\in\hypothesisSpace_\pi$ to be $1$.
	Using Lemma~\ref{l:findh}, we can assume that this learner can either determine that    $ p < \p(t) $ 
	or  find a 
	hypothesis $ h $ such that $ h\equiv t $, 
	in  time polynomial with respect to $ |t| $, $ p $ and 
		the largest counterexample
		seen so far.
	In the former case,
	this learner sets the estimated precision $ p $ of the target to $ p+1 $.
				This happens at most $ \p(t) $ times, which is bounded by $ |t| $. 
	As a consequence,
	$ \Fmf_\pi  $ is  in \PTimeL.	
\end{proof}	

We end this section recalling that our results can be transferred
to the PAC model~\cite{Valiant} extended with membership queries
(Theorem~\ref{th:pacmq}).
 \begin{theorem}[\cite{angluinqueries,MohEtAl}]\label{th:pacmq}
Let \PTimePL be the class of all learning frameworks that are 
 PAC learnable with membership queries in polynomial time. Then, 
$\PTimeL \subseteq \PTimePL$.
\end{theorem}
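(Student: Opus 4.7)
The plan is to prove this classical transfer result by showing how to simulate, inside the PAC-with-membership-queries protocol, every equivalence query that an exact learner would pose, while preserving polynomial running time and the PAC accuracy/confidence guarantees.

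First I would fix a learning framework $\Fmf=(\examples,\hypothesisSpace)\in\PTimeL$, a target $\target\in\hypothesisSpace$, a distribution $\prob$ over $\examples$, and parameters $\epsilon,\delta\in(0,1)$. Let $A$ be a polynomial time exact learner for $\Fmf$ and let $\poly(|\target|,|\e|)$ bound its running time, where $\e$ is the largest counterexample seen so far. The PAC learner will run $A$ step by step, forwarding every membership query to its own membership oracle $\MQc{\Fmf}{\target}$ unchanged. The only nontrivial step is to simulate the $i$-th equivalence query $\EQc{\Fmf}{\target}(h_i)$.

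Next I would describe the simulation of the $i$-th equivalence query. Draw an i.i.d.\ sample $\sample_i$ of size $m_i=\lceil\frac{1}{\epsilon}(\ln\frac{1}{\delta}+i\ln 2)\rceil$ from $\prob$, and for each drawn example $\e\in\sample_i$ query $\MQc{\Fmf}{\target}(\e)$ to learn whether $\target\models\e$; then compare with whether $h_i\models\e$ (which requires an entailment check, available in polynomial time since $\Fmf\in\PTimeL$). If no disagreement is found, halt and output $h_i$. If some $\e\in\sample_i$ witnesses $\target\models\e\not\equiv h_i\models\e$, return $\e$ to $A$ as a (positive or negative) counterexample and continue its execution.

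For correctness I would argue the accuracy bound using a standard union bound: the probability that $h_i$ has error larger than $\epsilon$ under $\prob$ but survives $m_i$ independent samples is at most $(1-\epsilon)^{m_i}\le e^{-\epsilon m_i}\le \delta/2^i$; summing over all rounds gives total failure probability at most $\delta$. For the running-time bound, I would use the fact that $A$ runs in polynomial time, so the number of equivalence queries it poses and the sizes of all hypotheses and counterexamples are polynomial in $|\target|$ and the largest counterexample seen so far; each simulated equivalence query costs $m_i$ membership queries and $m_i$ polynomial-time entailment checks, and both $m_i$ and the number of rounds are polynomial in $|\target|,1/\epsilon,1/\delta$. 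Conclude that the combined procedure is a polynomial time PAC learner with membership queries, so $\Fmf\in\PTimePL$.

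The main obstacle is not conceptual but bookkeeping: choosing the per-round sample sizes $m_i$ so that the union bound closes while $\sum_i m_i$ remains polynomial, and ensuring that the counterexamples produced by the simulated equivalence oracle are legitimate (in particular, that $\PTime$ entailment lets us decide $h_i\models\e$ deterministically so that the disagreement we feed back to $A$ is a genuine counterexample in the sense required by $\EQc{\Fmf}{\target}$). Everything else is a direct reduction to the behaviour of $A$.
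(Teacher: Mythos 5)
Your proposal is correct and is precisely the standard simulation argument from the cited sources (Angluin's ``Queries and Concept Learning''): the paper states Theorem~\ref{th:pacmq} as a known result and gives no proof of its own, relying on exactly this replacement of each equivalence query by a sample of size $\lceil\frac{1}{\epsilon}(\ln\frac{1}{\delta}+i\ln 2)\rceil$ with a union bound over rounds. Your observation that deciding $h_i\models\e$ is feasible because \PTimeL requires \PTime entailment is the right way to make the reduction go through in this paper's setting.
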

 	
By Theorems~\ref{thm:mqeq} and~\ref{th:pacmq}, the following holds. 
\begin{corollary}
For all \safe FO learning frameworks \Fmf, 
if  $\Fmf\in\PTimeL$ then $\Fmf_\pi\in\PTimePL$.
\end{corollary}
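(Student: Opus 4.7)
The plan is essentially to chain the two results immediately preceding the corollary, since the statement is just the transitive consequence of \PTimeL membership transfer and the PAC-with-membership-queries inclusion.

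First I would assume $\Fmf$ is a \safe FO learning framework with $\Fmf\in\PTimeL$. By Theorem~\ref{thm:mqeq}, since \Fmf is \safe, \PTimeL membership transfers from \Fmf to its possibilistic extension: $\Fmf\in\PTimeL$ iff $\Fmf_\pi\in\PTimeL$. Applying the forward direction gives $\Fmf_\pi\in\PTimeL$. Then I would invoke Theorem~\ref{th:pacmq}, which states that $\PTimeL \subseteq \PTimePL$; applied to $\Fmf_\pi$ this yields $\Fmf_\pi\in\PTimePL$, as required.

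There is no real obstacle here: the technical content lies entirely in Theorem~\ref{thm:mqeq} (which in turn rests on Lemma~\ref{l:findh} and the constructive argument using \algfindval and the pool of parallel instances $\Rsf$) and in Theorem~\ref{th:pacmq} (a known general reduction from exact learning with membership and equivalence queries to PAC learning with membership queries). Therefore the proof I would write is just a two-sentence composition of these two facts, and it does not need to examine possibilistic semantics or polynomial bounds directly — all such bookkeeping has already been absorbed into the cited results.
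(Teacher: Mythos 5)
Your proposal is correct and matches the paper exactly: the corollary is stated there as an immediate consequence of Theorem~\ref{thm:mqeq} (transferring \PTimeL membership from \Fmf to $\Fmf_\pi$ for \safe frameworks) composed with Theorem~\ref{th:pacmq} ($\PTimeL \subseteq \PTimePL$). No further argument is needed.
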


      \section{Conclusion}

Uncertainty is  widespread  in
learning processes. Among different   uncertainty formalisms,  possibilistic logic 
stands out because of its ability to express  preferences among worlds and model ignorance.
We presented the first study on the exact
(polynomial) learnability of possibilistic theories. 
It follows from our results  that   various algorithms 
designed for exact learning fragments of first-order logic 
can be adapted to learn their possibilistic extensions. 
  We leave  open the problem of polynomial time transferability with only equivalence queries.

\section*{Acknowledgements}
We are supported by the University of Bergen. We would like to thank Andrea Mazzullo
for joining initial discussions. 

\bibliographystyle{named}
\bibliography{biblio}

\end{document}